\documentclass[final]{siamltex}


\usepackage{amsfonts,amsmath,amssymb}
\usepackage{mathrsfs,mathtools}
\usepackage{enumerate}
\usepackage{hyperref}
\usepackage{esint}
\usepackage{graphicx}
\usepackage{subfigure}
\usepackage{algorithm}
\usepackage{algpseudocode}
\usepackage[normalem]{ulem}
\usepackage{comment}

\includecomment{ieee} 

\DeclareMathAlphabet{\mathpzc}{OT1}{pzc}{m}{it}

\newcommand{\bigo}[1]{{\cal O}({#1})}

\usepackage[usenames,dvipsnames]{color}

\newtheorem{rem}[theorem]{Remark}
\numberwithin{equation}{section}

\title{A Note on QR-Based Model Reduction: Algorithm, Software, and Gravitational Wave Applications\thanks{HA has been supported in part by NSF grant DMS-1521590. SEF has been supported in part by NSF grants PHY-1606654 and the Sherman Fairchild Foundation.}}

\author{Harbir Antil\thanks{Department of Mathematical Sciences, George Mason University, Fairfax, VA 22030, USA. \texttt{hantil@gmu.edu}},
\and 
Dangxing Chen\thanks{Department of Mathematics, University of North Carolina. Chapel Hill, NC 27514, USA}. ({\tt dangxing@live.unc.edu})
\and 
Scott E. Field\thanks{Department of Mathematics. University of Massachusetts, Dartmouth, MA 02747, USA. ({\tt sfield@umassd.edu})}}

\pagestyle{myheadings}
\thispagestyle{plain}
\markboth{H.~Antil, D.~Chen, S.~E.~Field}{A note on QR-based model reduction}

\date{Draft version of \today.}

\begin{document}

\maketitle
\begin{abstract}
While the proper orthogonal decomposition (POD) is optimal under certain norms it's also expensive to compute. For large matrix sizes, it is well known that the QR decomposition provides a tractable alternative. Under the assumption that it is rank--revealing QR (RRQR), the approximation error incurred is similar to the POD error and, furthermore, we show the existence of an RRQR with exactly same error estimate as POD. To numerically realize an RRQR decomposition, we will discuss the (iterative) modified Gram Schmidt with pivoting (MGS) and reduced basis method by employing a greedy strategy. We show that these two, seemingly different approaches from linear algebra and approximation theory communities are in fact equivalent. Finally, we describe an MPI/OpenMP parallel code that implements one of the QR-based model reduction algorithms we analyze. This code was developed with model reduction in mind, and includes functionality for tasks that go beyond what is required for standard QR decompositions. We document the code's scalability and show it to be capable of tackling large problems. In particular, we apply our code to a model reduction problem motivated by gravitational waves emitted from binary black hole mergers and demonstrate excellent weak scalability on the supercomputer Blue Waters up to $32,768$ cores and for complex, dense matrices as large as $10,000$-by-$3,276,800$ (about half a terabyte in size).

\end{abstract}

\begin{keywords}
greedy algorithm, QR decomposition, rank revealing, low-rank approximations, software
\end{keywords}
\begin{AMS}

\end{AMS}
%

\section{Introduction} \label{s:intro}

Algorithms to compute low-rank matrix approximations have enabled many recent scientific and engineering advances. In this CiSE special issue, we summarize the theoretical properties of the most influential low-rank techniques. We also show two of the most popular techniques are algorithmically equivalent and describe a massively parallel code for QR-based model reduction that has been used for gravitational wave applications. 
This preprint is an expanded, more technical version of the manuscript published in IEEE's Computing in Science \& Engineering. 

In this paper we consider both practical and theoretical low-rank approximations found by singular value decomposition (SVD) or QR decomposition of a matrix $S$, presenting error estimates, algorithms and properties of each. Both decompositions can be used, for example, to compute a low-rank approximation to a matrix (a common task in numerical linear algebra) or provide a high-fidelity approximation space suitable for model order reduction (a common task in engineering or approximation theory).

For certain norms an SVD-based approximation is optimal. However, for many large problems the (classical) SVD becomes problematic in terms of its memory footprint, FLOP count, and scalability on many-core machines. By comparison, QR-based model reduction is computationally competitive; it carries a lower FLOP count, is easily parallelized, and has a small inter-process communication overhead, thereby allowing one to efficiently utilize many-core machines. Indeed, for large matrices some SVD algorithms are based on QR decompositions~\cite{constantine2014model}. Furthermore, for certain matrices $S$, we will show that the SVD and a special class of QR decompositions share similar approximation properties. 

We are especially interested in the setting where the snapshot (or ``data") matrix may be too large to load into memory thereby precluding straightforward use of the singular value or, equivalently, a proper orthogonal decomposition (POD). In order to salvage an SVD approach, randomized or hierarchical methods can be used. QR factorizations have long been recognized
as an alternative low-rank approximation. For instance the \emph{rank revealing} QR (RRQR) factorization \cite{MR1157596, chan1987,MR1266606,MR0181094} computes a decomposition of a matrix $S \in \mathbb{R}^{N \times M}$ as 
\begin{equation} \label{eq:qr1}
S\Pi = QR = Q 
    \left[ \begin{array}{cc}
           R_{11} & R_{12} \\ 
           0 & R_{22} \end{array} \right] ,
\end{equation}
where $Q \in \mathbb{R}^{N \times N}$ is orthogonal, $R_{11} \in \mathbb{R}^{k\times k}$ is upper triangular, $R_{12} \in \mathbb{R}^{k \times (M-k)}$, and $R_{22} \in \mathbb{R}^{(N-k) \times (M-k)}$. The column permutation matrix $\Pi$ is usually chosen such that $\| R_{22} \|_2$ is small and $R_{11}$ is well-conditioned. This factorization~\eqref{eq:qr1} was introduced in \cite{MR0181094}, and the first algorithm to compute it is based on the QR factorization with column pivoting~\cite{MR0176590}. We also refer to a recent work on this subject \cite{demmel2015communication}.

While an RRQR always exists (see Sec.~\ref{s:orrqr}), it may be computationally challenging to find. We shall consider two specific QR strategies: modified Gram Schmidt (MGS) and a reduced basis method using a greedy strategy (RB--greedy). Although the former algorithm is widely known within the linear algebra community, the latter has become extremely popular in the approximation and numerical analysis communities \cite{PBinev_ACohen_RDevore_2010a,Devore2012,buffa2012priori}. We show that finite dimensional versions of these two approaches produce equivalent basis sets and discuss their error estimates. While for a generic $S$ these algorithms may not provide a RRQR, in all practical settings with which we are familiar these algorithms are rank revealing and the resulting RRQR approximation error is of the same order as the SVD/POD. There may be additional advantages when the columns form the basis as opposed to linear combinations over all columns; a typical example is column subset selection~\cite{tropp2009column}.

As a rank-revealer, the column pivoted QR decomposition is known to fail on, for example, Kahan's matrix~\cite{hong1992}. A formal fix to this is discussed in \cite[Section~4]{demmel2015communication}, see also \cite{MR2528516} where several related issues were analyzed and the appropriate algorithmic fixes were discussed. Nevertheless, matricies like the Kahan one are rarely (if ever) encountered in model reduction problems. In typical cases, the approximation properties of QR-based model reduction is summarized as follows. The RB-greedy error in Algorithm~\ref{algo:RB-greedy} is given by $\max_{1 \le i \le M} \| s_i - Q_k Q_k^T s_i \|_2$ where $s_i \in \mathbb{R}^N$ are columns of $S$ and $Q_k \in \mathbb{R}^{N\times k}$ (see Definition~\ref{fqr}). The state-of-the-art results presented in \cite{Devore2012, PBinev_ACohen_RDevore_2010a} provide us an a priori behavior of this error: if the Kolmogorov $n$-width (best approximation error) decays exponentially with respect to $k$ so does the greedy error. For many model reduction problems, smoothness with respect to parametric variation plays an essential role. For smooth models
the $n$-width (and thus the greedy error) is expected to decay exponentially 
fast~\cite{Devore2012,APinkus_1985}. 

We will show that MGS is equivalent to RB--greedy (see Proposition~\ref{prop:MGS_RBgreedy}) and derive error estimates for both algorithms. We recall error estimates for the \emph{full} QR decomposition in Theorems~\ref{thm:RBminimize}--\ref{theorem:qrmax} and, under the assumption that this decomposition is an RRQR, we show that the underlying error is of same order as POD in the $\ell^2$--norm. Existence of an \emph{optimal} RRQR decomposition is shown. 
\begin{ieee}
We give a reconstruction strategy in Section~\ref{s:recon}, which is cheaper than, but as accurate as, the POD. 
\end{ieee}

A key contribution of this paper is the development of a
publicly available code~\footnote{The code is available at \url{https://bitbucket.org/sfield83/greedycpp/}.} that implements the RB-greedy algorithm parallelized with message passing interface (MPI) and OpenMP. Unlike other parallelized QR codes, our software is designed with model reduction in mind and uses a simple interface for easy integration with model-generation codes. Sec.~\ref{s:numerics} documents the code's performance for dense matrices with sizes as large as $10^7$-by-$10^4$. Model reduction is sometimes combined with an empirical interpolation method, and we briefly document our codes efficiency in computing empirical interpolants~\cite{Maday_2009,chaturantabut:2737} using many thousands of basis. We focus on generating empirical interpolants for the acceleration of gravitational wave parameter 
inference~\cite{HAntil_SField_RHNochetto_MTiglio_2013,abbott2017first,smith2016fast,canizares2015accelerated,meidam2018parametrized}; the QR-accelerated inference codes have been used in the most recent
set of gravitational wave detections~\cite{abbott2017first,abbott2017gw170817,abbott2017gw170814}. 
For such large dimensional reduction problems, an efficient, parallelized 
code~\cite{greedycpp} 
running on thousands of cores has proven essential. 

\begin{ieee}
The outline of this paper is as follows. In Section~\ref{s:prom} we introduce projection based reduced order model (ROM) techniques. We summarize well known facts about POD/SVD-based model reduction in Section~\ref{s:POD} such as optimality results and error bounds. Section~\ref{s:fqr} discusses the \emph{full} QR factorization and the resulting approximation. Section~\ref{s:rrqr} motivates rank revealing QR-based model reduction as a computationally efficient alternative and provides error bounds and comparisons to POD. Two specific QR-based algorithms (MGS and RB--greedy) are considered and compared in Section~\ref{s:eqmgsrb}, and reconstruction technique is presented in Section~\ref{s:recon}. Section~\ref{s:numerics} documents performance and scalability tests of the open-source greedycpp code developed in this paper~\cite{greedycpp}. 
\end{ieee}

\section{Dimensional reduction techniques} \label{s:prom}
Let us assume we are given $M$ samples $ s_1, ..., s_M \in \mathbb{R}^N$ and an associated snapshot matrix $S=(s_1, ..., s_M) \in \mathbb{R}^{ N \times M}$ whose $i^\mathrm{th}$ column is $s_i$. Each $s_i$ corresponds to a realization of an underlying parameterized model: we evaluate the model at selected parameter values and designate the solution as $s_i$. 

Within the setting just described, reduced order models are derived from a low-rank approximation for $S$. As briefly summarized in this section, the SVD and QR exposes certain kinds of low-rank approximations. We introduce a few definitions.

\begin{definition}[full SVD] \label{fsvd}
Given a matrix $S \in \mathbb{R}^{N \times M}$, the \emph{full SVD} of $S$ is
\[
    S = V \Sigma W^T,
\]
where $V \in \mathbb{R}^{N \times N}, \Sigma \in \mathbb{R}^{N \times M}, W \in \mathbb{R}^{M \times M}$. In addition, $V$ and $W$ are orthogonal matrices, and $\Sigma$ is a diagonal matrix with non-increasing entries, known as singular values. The $k^{\tt th}$ singular value is denoted by $\sigma_k$. 
\end{definition}

From the singular values we define the \emph{ordinary} and \emph{numerical-ranks} of a matrix as follows:
\begin{definition}[ordinary- and numerical-ranks of $S$] \label{def:numerical_rank}
Let $S \in \mathbb{R}^{N \times M}$ be a matrix whose singular values $\{ \sigma_i \}_{i=1}^{M}$ are arranged in a decreasing order. Then $S$ is said to have numerical rank $k$ if
\begin{align*}
    \sigma_{k+1} \approx \epsilon_{\tt mach}
\end{align*}
where $\epsilon_{\tt mach}$ is the machine precision and a standard, or ordinary-rank, if
\begin{align*}
    \sigma_{k+1} = 0 \, .
\end{align*}
\end{definition}

\begin{definition}[full QR] \label{fqr}
The \emph{full $QR$} factorization of $S \in \mathbb{R}^{N \times M}$ is
\begin{align}\label{eq:qr}
    S\Pi = QR = \left[ \begin{array}{cc} 
    Q_k & Q_{N-k} \end{array} \right] 
    \left[ \begin{array}{cc}
           R_{11} & R_{12} \\ 
           0 & R_{22} \end{array} \right] . 
\end{align}
where $Q_k \in \mathbb{R}^{N \times k}$ and $Q_{N-k} \in \mathbb{R}^{N \times (N-k)}$ are orthogonal, $R_{11} \in \mathbb{R}^{k \times k}$ is upper triangular, $R_{12} \in \mathbb{R}^{k \times (M-k)}$, and $R_{22} \in \mathbb{R}^{(N-k) \times (M-k)}$. 
\end{definition}

The role of a permutation matrix $\Pi \in \mathbb{R}^{M \times M}$ in \eqref{eq:qr} is to swap columns of $S$ and is crucial for achieving QR-based model reduction. Different QR algorithms prescribe different rules for discovering $\Pi$.
If QR is not pivoted, then we define the permutation matrix as identity. 

Of particular interest is the RRQR decomposition. There are several different ways of defining an RRQR, one of them \cite{hong1992, MR1266606} says that the factorization \eqref{eq:qr} is an RRQR if:
\begin{definition}[RRQR] \label{def:RRQR} 
Assume $S \in \mathbb{R}^{N \times M}$ has numerical rank $k$, if
\begin{align*}
\sigma_{\tt min} (R_{11}) \gg \| R_{22} \|_2 \approx \epsilon_{\tt mach}
\end{align*}
then the factorization $S \Pi = QR$ is called a \em{Rank Revealing QR factorization} (RRQR) of $S$. 
\end{definition}

From~\cite[Lemma 1.2]{hong1992} we recall that the following holds for any $\Pi$
\[
 \sigma_k(S) \ge \sigma_{\rm min}(R_{11}) \quad 
 \mbox{and} \quad 
 \| R_{22} \|_2 = \sigma_{\rm max}(R_{22}) \ge \sigma_{k+1}(S) , 
\]
whence Definition~\ref{def:RRQR} implies
\[
 \sigma_{k+1}(S) \le \| R_{22} \|_2 = \sigma_{\rm max}(R_{22}) \ll \sigma_{\rm min}(R_{11}) \le \sigma_{k}(S) ,
\]
i.e., Definition~\ref{def:RRQR} introduces a large gap between $\sigma_{k+1}(S)$ and $\sigma_{k}(S)$. Finally, we define an {\em optimal RRQR} of $S$ as follows:
\begin{definition}[optimal RRQR] \label{def:RRQR_optimal} 
$QR=S \Pi$ is an optimal RRQR of $S \in \mathbb{R}^{N\times M}$ if
\begin{align}
\| S-Q_k Q_k^TS\|_2 = \sigma_{k+1}.
\end{align}
\end{definition}

Projection-based model reduction represents a single column of the matrix, $s_i$, via orthogonal projection of $s_i$ onto the span of the basis. Approximation errors using an SVD basis, $s_i - V_k V_k^T s_i$, or using a QR basis, $s_i - Q_k Q_k^Ts_i$, are considered in the next sections. Throughout this paper we will use $V_k$ to denote a matrix formed by the first $k$ columns of $V$.

\section{Full POD/SVD with error estimates} \label{s:POD}

We first recall the POD problem formulation: A POD computes $k$ orthonormal vectors $ v_1, ..., v_k \in \mathbb{R}^N$ which provide an optimal solution to 
\begin{equation} \label{eq:min_POD}
\begin{aligned}
    \epsilon_* := \min_{Y_k \in \mathbb{R}^{N \times k} }   \|S - Y_k Y_k^T S\|_*^2 
\end{aligned}
\end{equation}
where $*$ is either the Frobenius $(F)$ or matrix 2-norm. It is well known that the $*$--norm solution to \eqref{eq:min_POD} can be computed by first performing a SVD of $S = V \Sigma W^T$, from which $Y_k = V_k = (v_1, ..., v_k) \in \mathbb{R}^{ N \times k}$ is simply the first $k$ columns of $V$. 

We shall assume, for definiteness, a \emph{full SVD} of $S$. We will frequently require matrices with zeros in all columns after $k+1$ and shall denote these with a superscript ``$0$". For example, $V_k V_k^T V := V_k^0$ is $V$ with zeros in columns $k+1$ to $N$. Using this notation 
it is easy to see that the POD approximation of $S$
\begin{align} \label{eq:SVD_POD}
V_k V_k^T S = V_k V_k^T V \Sigma W^T = V_k^0 \Sigma W^T = V \Sigma_k^0 W^T = \sum_{i=1}^k \sigma_i v_i w_i^T 
\end{align}
is exactly a sum of $k$ rank-one matrices, 
where $\Sigma_k^0$ is $\Sigma$ with zeros in columns $k+1$ to $N$. This illustrates the close connection of POD with the \emph{partial SVD} factorization $V \Sigma_k^0 W^T$.

\subsection{2-norm and $F$-norm POD error estimates}\label{s:2FfPOD}

We briefly recall standard POD error estimates. 
\begin{ieee}
\begin{lemma} \label{lem:frobenius_norm2}
Let $A$ and $B$ be two matrices. If $ A $ is orthogonal then $ \| AB \|_* = \| B \|_* $; if $ B $ is orthogonal, then $  \| AB \|_* = \| A \|_*$.     
\end{lemma}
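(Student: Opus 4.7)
The plan is to verify the two assertions separately for the Frobenius norm and the spectral (matrix 2-) norm, using in each case only the defining characterization together with the fact that an orthogonal matrix $U$ satisfies both $U^T U = I$ and $U U^T = I$.

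For the Frobenius case, I would expand
\[
 \|AB\|_F^2 = \mathrm{tr}\bigl((AB)^T(AB)\bigr) = \mathrm{tr}(B^T A^T A B).
\]
When $A$ is orthogonal, substituting $A^T A = I$ immediately yields $\mathrm{tr}(B^T B) = \|B\|_F^2$. For the right-multiplication claim, I would apply the cyclic property of the trace to rewrite the same expression as $\mathrm{tr}(A^T A B B^T)$; when $B$ is orthogonal, $B B^T = I$ collapses this to $\mathrm{tr}(A^T A) = \|A\|_F^2$. (Equivalently, one can reduce the second claim to the first via $\|AB\|_F = \|B^T A^T\|_F$ and the fact that $B^T$ is orthogonal when $B$ is.)

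For the spectral case, I would use the characterization $\|AB\|_2 = \max_{\|x\|_2 = 1} \|ABx\|_2$. When $A$ is orthogonal, $\|ABx\|_2 = \|Bx\|_2$ because orthogonal matrices preserve the Euclidean norm of vectors, giving $\|AB\|_2 = \max_{\|x\|_2 = 1} \|Bx\|_2 = \|B\|_2$. When $B$ is orthogonal, I would make the change of variables $y = Bx$; since $B$ maps the unit sphere bijectively onto itself,
\[
 \|AB\|_2 = \max_{\|x\|_2=1} \|ABx\|_2 = \max_{\|y\|_2=1}\|Ay\|_2 = \|A\|_2.
\]

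There is no genuine obstacle here: the lemma is a compact restatement of the unitary invariance of the Frobenius and spectral norms, and each of the four sub-cases reduces to a single application of orthogonality. The one point to be careful about is the convention that ``orthogonal'' refers to a square matrix with $U^T U = U U^T = I$; in the rectangular case (orthonormal columns but not square) only one of these identities is available, and the opposite-sided claim can fail, so I would flag this assumption at the start of the proof.
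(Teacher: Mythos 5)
Your proof is correct and complete; the paper itself states this lemma without proof, treating it as the standard unitary invariance of the Frobenius and spectral norms, and your trace computation and unit-sphere change of variables are exactly the textbook argument one would supply. Your closing caveat about square orthogonal matrices versus rectangular matrices with merely orthonormal columns is well taken, since the paper elsewhere calls $Q_k \in \mathbb{R}^{N \times k}$ ``orthogonal,'' and in its applications of the lemma only full square factors such as $V$ and $Q$ are ever pulled out, so no gap arises.
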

\end{ieee}

\begin{theorem}[POD error estimates] \label{thm:POD}
Given $S \in \mathbb{R}^{N\times M}$ with SVD of $S = V \Sigma W^T$, then
\begin{enumerate}
    \item[$(i)$] $\|S-V_k V_k^T S\|_F^2 = \sum_{j=k+1}^{\min{ \{  M,N \} }} \sigma_j^2$.  

    \item[$(ii)$] $\|S-V_k V_k^T S\|_2 = \sigma_{k+1}$.  
\end{enumerate}
\end{theorem}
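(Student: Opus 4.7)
The plan is to reduce both norm identities to a single computation on a diagonal matrix by exploiting the orthogonal invariance recorded in Lemma~\ref{lem:frobenius_norm2}. Building directly on the identity \eqref{eq:SVD_POD}, I would first rewrite the error as
\begin{equation*}
S - V_k V_k^T S \;=\; V \Sigma W^T - V \Sigma_k^0 W^T \;=\; V \bigl( \Sigma - \Sigma_k^0 \bigr) W^T,
\end{equation*}
where $\Sigma - \Sigma_k^0 \in \mathbb{R}^{N \times M}$ is the rectangular diagonal matrix whose first $k$ diagonal entries vanish and whose remaining diagonal entries are $\sigma_{k+1}, \sigma_{k+2}, \ldots, \sigma_{\min\{M,N\}}$.

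Since $V$ and $W$ (hence $W^T$) are orthogonal, Lemma~\ref{lem:frobenius_norm2} applied twice gives, for both $* = F$ and $* = 2$,
\begin{equation*}
\| S - V_k V_k^T S \|_* \;=\; \| V (\Sigma - \Sigma_k^0) W^T \|_* \;=\; \| \Sigma - \Sigma_k^0 \|_* .
\end{equation*}
This single reduction handles both parts at once, leaving only the computation of two norms of a known diagonal matrix.

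For $(i)$, the Frobenius norm equals the square root of the sum of squares of all entries; the only nonzero entries of $\Sigma - \Sigma_k^0$ are $\sigma_{k+1}, \ldots, \sigma_{\min\{M,N\}}$ on the diagonal, yielding the claimed sum. For $(ii)$, the $2$-norm of a (possibly rectangular) diagonal matrix equals the largest absolute value of a diagonal entry, and by Definition~\ref{fsvd} the singular values are arranged in non-increasing order, so the maximum among $\sigma_{k+1}, \ldots, \sigma_{\min\{M,N\}}$ is exactly $\sigma_{k+1}$.

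There is no substantive obstacle here; the only point requiring care is the bookkeeping on the shape of $\Sigma - \Sigma_k^0$, which is $N \times M$ rather than square, so the Frobenius sum in $(i)$ must terminate at $\min\{M,N\}$ exactly as stated. Given the earlier identity \eqref{eq:SVD_POD} and the orthogonal-invariance lemma, both conclusions follow immediately from the diagonal reduction above.
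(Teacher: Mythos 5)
Your proposal is correct and follows essentially the same route as the paper: both reduce the error to $\|\Sigma - \Sigma_k^0\|_*$ via the identity \eqref{eq:SVD_POD} and the orthogonal invariance of Lemma~\ref{lem:frobenius_norm2}, then read off the two norms of the rectangular diagonal matrix. Your explicit remark about the $N \times M$ shape and the truncation at $\min\{M,N\}$ is a slightly more careful piece of bookkeeping than the paper bothers with, but the argument is the same.
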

\begin{ieee}
\begin{proof} 
The proof is standard but we recall it here for completeness. By Lemma~\ref{lem:frobenius_norm2} and Equation~\eqref{eq:SVD_POD}
\begin{align*}
    \|S-V_k V_k^T S\|_*^2 & = \| V \Sigma - V_k V_k^T V \Sigma \|_*^2 = \| \Sigma - \Sigma_k^0 \|_*^2  \, .
\end{align*}
From the definition of $F$ and 2 norms, we deduce
\[
    \| \Sigma - \Sigma_k^0 \|_F^2   = \sum_{j=k+1}^{\min{ \{  m,N \} }} \sigma_j^2,
        \qquad 
    \| \Sigma - \Sigma_k^0 \|_2   = \sigma_{k+1} \, ,
\]
which completes the proof. 
\hfill
\end{proof}
\end{ieee}

\begin{ieee}
From Theorem~\ref{thm:POD}, errors measured in the $F$-norm require computation of all singular values, which can be expensive. Using the 2-norm requires only the first $k+1$ singular values, which motivates a choice of $\sigma_{k+1} < \tau$ to control the approximation error $\tau$ in Algo.~\ref{algo:POD}. In practice, we always choose $\tau$ larger than machine precision.

\hspace{0.5cm}

{\scriptsize
\begin{algorithm}[H]
\caption{POD}
\label{algo:POD}
\begin{algorithmic}[1]
\State \textbf{Input}: Snapshot matrix $S = (s_1, ... , s_M) \in \mathbb{R}^{N \times M}$ and tolerance $\tau >0$
\State \textbf{Output}: $ V_k = (v_1, ... , v_k) \in \mathbb{R}^{N \times k}$ 
\vskip 10pt
\State Compute the singular value decomposition $S = V \Sigma W^T $.
\State Find smallest index $ k $ such that the singular values satisfy $ \sigma_{k+1} < \tau.$ 
\State Return the first $ k $ columns $V_k=(v_1, ... , v_k) \in \mathbb{R}^{N \times k} $ of $V$.
\end{algorithmic}
\end{algorithm}
}
\end{ieee}

\section{Full QR with error estimates} \label{s:fqr}

In Theorem~\ref{thm:POD} we showed that POD provides the best rank $k$ $*$-norm approximation to the snapshot matrix $S$. In Theorem~\ref{thm:RBminimize} we will see that under the assumption that the decomposition is an RRQR (according to Def. ~\ref{def:RRQR}) the resulting approximation error using \emph{full} QR factorization is of the same order as POD-based approximation error. In Section~\ref{s:2Ffqr} we will first discuss the $*$--norm error estimates and we conclude with max-norm error estimates in Section~\ref{s:mfqr}. 
Note that a QR decomposition always exists but need not be unique~\cite{trefethen1997numerical}.

\subsection{2-norm and $F$-norm error estimates}\label{s:2Ffqr}

We define the QR-based error for the decomposition \eqref{eq:qr} as
\begin{align}\label{eq:qr_err_star} 
    \epsilon_*^{QR} (\Pi) := \|S\Pi-Q_kQ_k^T S \Pi \|_* = \| S - Q_kQ_k^T S\|_*,
\end{align}
where $*$ could be either 2-norm or $F$-norm. The QR-based approximation error 
depends on the permutation matrix implicitly through $Q_k$. 
To avoid extra notation, we sometimes omit writing $S\Pi$ and assume $S$ is already pivoted when it is clear from context. 

Let
\begin{align} \label{eq:R_svd}
    R = \overline{V} \Sigma \overline{W}^T ,
\end{align}
be the SVD of $ R $ where we have added over-bars to make a clear distinction with the SVD of $S$ and, recall, 
that $S$ and $R$ have the same singular value spectrum (hence $\overline{\Sigma} = \Sigma$).
Furthermore $\overline{V} \in \mathbb{R}^{N \times N}$, $\overline{W} \in \mathbb{R}^{M \times M}$ and $\Sigma \in \mathbb{R}^{N \times M}$. Let 
\begin{align} \label{eq:Qk0}
Q_k^0 := Q_k Q_k^T Q \, 
\end{align}
 be $Q$ with zeros in columns from $k+1$ to $N$, $ \overline{V}_{k,0}$ is $ \overline{V} $ with zeros from rows  $ k+1 $ to $N$, and $\overline{V}_{N-k,0}$ is $\overline{V}$ with zeros from rows $1$ to $k$. Observe that
\begin{align} \label{eq:Vk0}
Q_k^0 \overline{V} = Q \overline{V}_{k,0}.
\end{align}

\begin{theorem}[full QR $*$-norm error estimate] \label{thm:RBminimize} 
Let $Q_k \in \mathbb{R}^{N\times k}$ denote the first $k$ columns of $Q$, then
\begin{align}\label{eq:RBminimize}
    \epsilon_*^{QR} (\Pi) = \| S - Q_kQ_k^TS \|_* =\| \overline{V}_{N-k,0} \Sigma\|_* = \|R_{22} \|_*,
\end{align}
where $*$ is either 2-norm or F-norm. A proof can be found in Ref.~\cite{chan1987}
\end{theorem}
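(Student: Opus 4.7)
The plan is to chain three equalities together. The first, $\epsilon_*^{QR}(\Pi) = \|S - Q_k Q_k^T S\|_*$, is purely notational: it is exactly the definition \eqref{eq:qr_err_star} after absorbing $\Pi$ into $S$ as the text permits, so no work is needed there. All the content sits in the remaining two equalities, and both will be pushed through by \emph{orthogonal invariance of the norm} (Lemma~\ref{lem:frobenius_norm2}) plus some careful bookkeeping with the zero-padded matrices.

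For the second equality $\|S - Q_k Q_k^T S\|_* = \|\overline{V}_{N-k,0}\Sigma\|_*$, I would substitute $S = QR = Q\overline{V}\Sigma\overline{W}^T$ from \eqref{eq:R_svd} and compute
\begin{align*}
(I - Q_k Q_k^T) S &= (Q - Q_k Q_k^T Q)\overline{V}\Sigma\overline{W}^T = (Q - Q_k^0)\overline{V}\Sigma\overline{W}^T,
\end{align*}
where the second step uses definition \eqref{eq:Qk0}. The key algebraic identity is
\[
(Q - Q_k^0)\overline{V} = Q\overline{V}_{N-k,0},
\]
which I would extract from \eqref{eq:Vk0} after writing $\overline{V} = \overline{V}_{k,0} + \overline{V}_{N-k,0}$ (splitting $\overline{V}$ into its first $k$ rows and the remaining $N-k$ rows). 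Applying Lemma~\ref{lem:frobenius_norm2} once on the left (with the orthogonal factor $Q$) and once on the right (with the orthogonal factor $\overline{W}^T$) then strips both factors and delivers $\|\overline{V}_{N-k,0}\Sigma\|_*$.

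For the third equality $\|\overline{V}_{N-k,0}\Sigma\|_* = \|R_{22}\|_*$, I would reintroduce $\overline{W}^T$ by another application of Lemma~\ref{lem:frobenius_norm2}, so that the object under the norm becomes $\overline{V}_{N-k,0}\Sigma\overline{W}^T$. By construction this matrix is $R$ with its first $k$ rows zeroed out, and the block partition in Definition~\ref{fqr} tells me that the surviving bottom $N-k$ rows of $R$ are exactly $[\,0 \;\; R_{22}\,]$. Since both the Frobenius and spectral norms are invariant under padding by zero rows and zero columns, the resulting norm equals $\|R_{22}\|_*$, closing the chain.

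The main obstacle, such as it is, is simply keeping the three zero-padded objects $Q_k^0$, $\overline{V}_{k,0}$, $\overline{V}_{N-k,0}$ straight—remembering which has zeros in which rows or columns—and not conflating $Q_k^0$ (an $N \times N$ matrix with zero \emph{columns}) with $\overline{V}_{N-k,0}$ (an $N \times N$ matrix with zero \emph{rows}). Once the identity $(Q - Q_k^0)\overline{V} = Q\overline{V}_{N-k,0}$ is isolated, the rest is routine invocation of orthogonal invariance.
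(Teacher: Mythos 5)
Your proposal is correct and follows essentially the same route as the paper's proof: substitute $S=QR=Q\overline{V}\Sigma\overline{W}^T$, use \eqref{eq:Qk0} and \eqref{eq:Vk0} to reduce to $\|\overline{V}_{N-k,0}\Sigma\|_*$ via orthogonal invariance, and identify the bottom $N-k$ rows of $R$ with $[\,0\;\;R_{22}\,]$. The only cosmetic difference is that the paper obtains $\|R_{22}\|_*$ directly from $\|QR-Q_k^0R\|_*$ and the upper-triangular structure of $R$, rather than passing back through $\overline{V}_{N-k,0}\Sigma\overline{W}^T$ as you do.
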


\begin{ieee}
\begin{proof}
To show the first equality, we use Lemma~\ref{lem:frobenius_norm2}, ~\eqref{eq:R_svd}, and ~\eqref{eq:Qk0}, we deduce
\begin{align*} 
    \| S - Q_kQ_k^TS \|_*  & = \| QR - Q_k Q_k^T Q R \|_* = \| QR - Q_k^0 R\|_* \\
      & = \| Q \overline{V} \Sigma \overline{W}^T - Q_k^0 \overline{V} \Sigma \overline{W}^T \|_* = \| Q \overline{V} \Sigma - Q_k^0 \overline{V} \Sigma \|_* .
\end{align*}  
Invoking ~\eqref{eq:Vk0}, we obtain
\begin{align*}
\| S - Q_kQ_k^TS \|_*
  &= \| Q \overline{V} \Sigma - Q \overline{V}_{k,0} \Sigma \|_*
   = \| (\overline{V} - \overline{V}_{k,0} ) \Sigma\|_*
   = \| \overline{V}_{N-k,0} \Sigma \|_* \, ,
\end{align*}
which proves the first equality in \eqref{eq:RBminimize}.

The proof of the second equality is similar to the proof of Theorem~\ref{thm:POD}, we readily obtain
\[
\| S - Q_kQ_k^T S \|_* = \| QR-Q_k^0 R \|_* = \|R_{22} \|_*,
\]
the last equality naturally comes out as $R$ upper triangular. Thus, we conclude. 
\hfill
\end{proof}
\end{ieee}

\begin{rem}\rm
Invoking the optimality of POD, we notice that $\| S - V_k V_k^T S \|_{*} \le \| S - Q_k Q_k^T S \|_{*}$. Assuming that QR is RRQR (according to Def.~\ref{def:RRQR}) and employing Theorem~\ref{thm:RBminimize} we notice that the approximation error of the \emph{full} QR and POD are of the same order. 
\end{rem}

\subsection{Max--norm error estimate}\label{s:mfqr}

We define the QR--based approximation error in the max-norm as
\begin{align} \label{eq:qr_err_max}
    \epsilon_{\max}^{QR} (\Pi) := \max_{1 \leq i \leq M} \| s_i - Q_k Q_k^T s_i \|_2 \, , 
\end{align} 
where, for vectors, $\| \cdot \|_2$ is the usual Euclidean norm. Measured in the max-norm, the QR-based approximation error is given by the following result.

\begin{theorem}[full QR max-norm error estimate] \label{theorem:qrmax} 
Let $r_i$ be the $i$-th column of $R$ and $\widetilde{r}_i$ be a subvector of $r_i$ from row $k+1$ to $N$. Then the approximation error for any given $s_i$ is 
\[
 \| s_i - Q_k Q_k^T s_i \|_2 =  \| \widetilde{r}_i \|_2, \ i = 1, \dots, M , 
\]
whence $\epsilon_{max}^{QR} = \max_{1 \leq i \leq M} \| \widetilde{r}_i \|_2$.
\end{theorem}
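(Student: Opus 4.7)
The plan is to reduce the claim to a simple block computation using the factorization $S\Pi = QR$ together with the orthogonality of $Q$. Since the convention established after \eqref{eq:qr_err_star} is to drop $\Pi$ and work with the pivoted matrix, the $i$-th column of $S$ can be written as $s_i = Q r_i$, where $r_i$ is the $i$-th column of $R$.

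First, I would partition $Q = [Q_k \ Q_{N-k}]$ conformably with the block form \eqref{eq:qr} and compute $Q_k^T Q = [I_k \ 0] \in \mathbb{R}^{k\times N}$. Consequently $Q_k Q_k^T Q = [Q_k \ 0]$, and subtracting from $Q$ yields $Q - Q_k Q_k^T Q = [0 \ Q_{N-k}]$. Splitting $r_i$ as $r_i = \begin{bmatrix} \widehat{r}_i \\ \widetilde{r}_i \end{bmatrix}$ with $\widehat{r}_i \in \mathbb{R}^k$ and $\widetilde{r}_i \in \mathbb{R}^{N-k}$, this produces the identity
\[
s_i - Q_k Q_k^T s_i \;=\; (Q - Q_k Q_k^T Q)\, r_i \;=\; Q_{N-k}\, \widetilde{r}_i.
\]

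The second (and final) step would be to take the Euclidean norm and use that $Q_{N-k}$ has orthonormal columns, so $\|Q_{N-k}\widetilde{r}_i\|_2 = \|\widetilde{r}_i\|_2$. Taking the maximum over $i = 1,\dots,M$ then gives $\epsilon_{\max}^{QR} = \max_{1 \le i \le M} \|\widetilde{r}_i\|_2$, as claimed.

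There isn't really a hard step here: the entire argument is a one-line block computation, essentially the column-wise analogue of the identity $\|S - Q_k Q_k^T S\|_* = \|R_{22}\|_*$ proved in Theorem~\ref{thm:RBminimize}. The only thing worth flagging is the notational point that $s_i$ is understood to be the $i$-th column of the pivoted matrix $S\Pi$, so that $Q r_i = s_i$ holds without an additional permutation factor; once this convention is in force, the result follows immediately from the orthogonality of $Q_{N-k}$.
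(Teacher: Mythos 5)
Your proof is correct and follows essentially the same route as the paper's: both write $s_i = Qr_i$, observe that $Q - Q_kQ_k^TQ$ is $Q$ with its first $k$ columns zeroed (the paper packages this as $Q - Q_k^0$, you write it explicitly as $[0 \ \ Q_{N-k}]$), and conclude by the orthonormality of the remaining columns. The only difference is notational.
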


\begin{ieee}
\begin{proof}
Invoking Lemma~\ref{lem:frobenius_norm2} and Equation~\eqref{eq:Qk0}, we deduce 
\[
\| s_i- Q_k Q_k^T s_i \|_2 = \| Qr_i - Q_k Q_k^T Qr_i \|_2 = \| Qr_i - Q_k^0 r_i \|_2 = \| \widetilde{r}_i \|_2 \, ,
\]
thus we conclude. 
\end{proof}
\end{ieee}

\begin{corollary}\label{cor:emax_es}
The following relation between the $*$ and max norm estimates for QR holds 
\[
\epsilon_{max}^{QR} \le \epsilon_*^{QR}.
\]
\end{corollary}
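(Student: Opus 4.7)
The plan is to combine the two max-norm/$*$-norm identities proved in Theorem~\ref{theorem:qrmax} and Theorem~\ref{thm:RBminimize}, and then exploit the upper-triangular block structure of $R$ to compare $\max_i \|\widetilde{r}_i\|_2$ with $\|R_{22}\|_*$.

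First, I would note that because $R$ is upper triangular in the sense of \eqref{eq:qr} (with a zero block below $R_{11}$), the subvector $\widetilde{r}_i$ of rows $k+1,\dots,N$ of the $i$-th column $r_i$ satisfies $\widetilde{r}_i = 0$ for $i \le k$, and for $i > k$ it is exactly the $(i-k)$-th column of $R_{22}$. Consequently
\[
\epsilon_{\max}^{QR} = \max_{1 \le i \le M} \|\widetilde{r}_i\|_2 = \max_{1 \le j \le M-k} \|(R_{22})_j\|_2,
\]
where $(R_{22})_j$ denotes the $j$-th column of $R_{22}$.

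Next I would split into the two cases for $*$. For the spectral norm, each column of $R_{22}$ can be written as $R_{22} e_j$, and the submultiplicativity of $\|\cdot\|_2$ immediately gives $\|(R_{22})_j\|_2 \le \|R_{22}\|_2$. For the Frobenius norm, the standard column-sum identity $\|R_{22}\|_F^2 = \sum_j \|(R_{22})_j\|_2^2$ trivially dominates any single term $\|(R_{22})_j\|_2^2$. Taking the max over $j$ in both cases and applying Theorem~\ref{thm:RBminimize} (which identifies $\|R_{22}\|_* = \epsilon_*^{QR}$) yields the claimed bound.

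There is no real obstacle here; the only mild point worth flagging in the write-up is making explicit that the columns of $R$ indexed by $i \le k$ contribute zero to the max, so the max is genuinely a max over the columns of $R_{22}$, after which both inequalities are one-line consequences of standard norm facts.
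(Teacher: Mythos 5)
Your proof is correct and follows essentially the same route as the paper: both combine Theorem~\ref{thm:RBminimize} and Theorem~\ref{theorem:qrmax} and then bound $\max_i \|\widetilde{r}_i\|_2$ by $\|R_{22}\|_*$ via the fact that each nonzero $\widetilde{r}_i$ is a column (submatrix) of $R_{22}$. You are simply more explicit than the paper in noting that $\widetilde{r}_i = 0$ for $i \le k$ and in spelling out the column-versus-matrix norm inequality separately for the $2$- and $F$-norms, which the paper compresses into the single remark that the norm of a submatrix is bounded by the norm of the matrix containing it.
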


\begin{ieee}
\begin{proof}
Invoking Theorem~\ref{thm:RBminimize} and Theorem~\ref{theorem:qrmax} and the fact that $ \widetilde{r}_i$ is the submatrix of $R_{22}$, we conclude
\[
\max_{1 \leq  i \leq M} \|\widetilde{r}_i \|_* \leq \| R_{22} \|_*,
\]
which is due to the fact that the norm of a submatrix is less than the norm of matrix it embedded in. 
\end{proof}
\end{ieee}

\section{Specific QR algorithms} \label{s:rrqr}

The goal of this section is to discuss different QR decomposition strategies (that is to say, the pivoting strategy) 
useful for low-rank approximation. We first consider the \emph{optimal} RRQR whose 2-norm error estimates exactly match the POD
error estimates (see Theorem~\ref{thm:POD}). Next we consider \emph{practical} RRQR algorithms which are implementable.
For the \emph{practical} RRQR, in Section~\ref{s:prrqr}, we will discuss two algorithms: 
modified Gram--Schmidt (MGS) with pivoting and a reduced basis method using a greedy approach (RB--greedy). 
We show that these methods are in fact equivalent in certain settings. We will derive the error estimates and furnish FLOP counts. 
\begin{ieee}
In Section~\ref{s:recon} we discuss a QR-reconstruction strategy with error estimates similar to POD.
\end{ieee}

\subsection{Optimal RRQR} \label{s:orrqr}

In this section we show that an optimal QR-based approximation exits (although its not necessarily unique). We will provide a constructive proof. We begin by computing the SVD of $S$ as 
\begin{align}\label{eq:oqr_1}
S = V \Sigma W^T = 
\left[ \begin{array}{cc}
V_k & V_{N-k}
\end{array} \right]
\left[ \begin{array}{cc}
\Sigma_k & 0 \\
0 & \Sigma_{N-k,M-k} \end{array} \right]
\left[ \begin{array}{cc}
W_{k}^T \nonumber \\
W_{M-k}^T \end{array} \right]  \\
= V_k \Sigma_k W_k^T+V_{N-k} \Sigma_{N-k,M-k} W_{M-k}^T.
\end{align}
In addition, let 
\begin{align}\label{eq:oqr_3}
\mathcal{QR} = \Sigma_k W_k^T ,
\end{align}
be a QR factorization of $\Sigma_k W_k^T$. Finally, we set 
\begin{align}\label{eq:oqr_4}
Q_k = V_k \mathcal{Q} \, .
\end{align}
We will prove that such a $Q_k$ will lead to the \emph{optimal} RRQR according to Def.~\ref{def:RRQR}. 

\begin{theorem}[existence of optimal RRQR and equivalence to POD error] \label{thm:RRQR}
Let $S \in \mathbb{R}^{N \times M}$, then an optimal RRQR (according to Def.~\ref{def:RRQR}) of $S$ exists, with 
\begin{equation}\label{eq:oqrmat}
Q = \left[ \begin{array}{cc} 
Q_k & Q_{N-k} \end{array} \right]
\quad \mbox{and} \quad 
R = 
\left[ \begin{array}{cc}
\mathcal{R} \\
0 \end{array} \right] ,
\end{equation}
where $\mathcal{R}$ and $Q_k$ are defined in \eqref{eq:oqr_3} and \eqref{eq:oqr_4} respectively and $Q_{N-k} \in \mathbb{R}^{N \times (N-k)}$ is defined such that $Q^T Q = I$. The following estimate holds
\begin{equation}\label{eq:oqrpod}
    \| S - Q_kQ_k^TS \|_2 = \| S - V_k V_k^T S \|_2 = \sigma_{k+1}.
\end{equation}
where $V_k$ is as defined in Theorem~\ref{thm:POD}. 
\end{theorem}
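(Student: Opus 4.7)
The plan is to directly verify the explicit construction \eqref{eq:oqr_3}--\eqref{eq:oqr_4} given in the theorem. The central observation is that the orthogonal projector onto the range of $Q_k = V_k \mathcal{Q}$ coincides with the POD projector $V_k V_k^T$, which immediately reduces the 2-norm bound \eqref{eq:oqrpod} to Theorem~\ref{thm:POD}(ii).

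I will proceed in three short steps. First, I check that $Q_k$ has orthonormal columns: since $V_k^T V_k = I_k$ (from the full SVD) and $\mathcal{Q}^T \mathcal{Q} = I_k$ (from the QR of the $k \times M$ matrix $\Sigma_k W_k^T$ in \eqref{eq:oqr_3}), a direct computation gives $Q_k^T Q_k = \mathcal{Q}^T V_k^T V_k \mathcal{Q} = I_k$. Second, the key algebraic identity is $Q_k Q_k^T = V_k (\mathcal{Q} \mathcal{Q}^T) V_k^T = V_k V_k^T$. Third, Theorem~\ref{thm:POD}(ii) then yields
\[
\| S - Q_k Q_k^T S \|_2 \;=\; \| S - V_k V_k^T S \|_2 \;=\; \sigma_{k+1},
\]
which is precisely \eqref{eq:oqrpod} and, by Definition~\ref{def:RRQR_optimal}, certifies optimality.

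To exhibit the full factorization in the form \eqref{eq:oqrmat} and verify the block structure required by Definition~\ref{def:RRQR}, I will extend $Q_k$ to a full orthogonal matrix $Q = [Q_k, Q_{N-k}]$ by choosing any orthonormal basis of the orthogonal complement of the range of $V_k$. With $\Pi = I$ and $R := Q^T S$, the first $k$ rows of $R$ compute to $Q_k^T S = \mathcal{Q}^T V_k^T S = \mathcal{Q}^T \Sigma_k W_k^T = \mathcal{R}$, which is upper trapezoidal by construction, so its leading $k \times k$ block $R_{11}$ is upper triangular. The remaining block $R_{22}$ satisfies $\|R_{22}\|_2 \le \|(I - Q_k Q_k^T) S\|_2 = \sigma_{k+1}$, at the level of $\epsilon_{\tt mach}$ under the numerical-rank-$k$ assumption.

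The main obstacle is the gap condition $\sigma_{\min}(R_{11}) \gg \|R_{22}\|_2$ required by Definition~\ref{def:RRQR}. Since $\mathcal{R} = \mathcal{Q}^T \Sigma_k W_k^T$ inherits the nonzero singular values $\sigma_1(S),\ldots,\sigma_k(S)$ of $\Sigma_k W_k^T$ (using that the rows of $W_k^T$ are orthonormal), we have $\sigma_{\min}(\mathcal{R}) = \sigma_k(S)$. Propagating this bound to the leading $k \times k$ block $R_{11}$ of the trapezoidal $\mathcal{R}$ will require a suitable column pivoting of $\Sigma_k W_k^T$ that keeps $\sigma_{\min}(R_{11})$ of order $\sigma_k(S)$; once this is done, the desired gap $\sigma_{\min}(R_{11}) \gg \sigma_{k+1}(S) \approx \epsilon_{\tt mach}$ follows from the numerical-rank-$k$ hypothesis. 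The 2-norm optimality itself, however, is essentially a one-line consequence of the projector identity $Q_k Q_k^T = V_k V_k^T$ and does not depend on this pivoting argument.
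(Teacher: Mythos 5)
Your argument is correct and follows the paper's own constructive proof almost verbatim: both reduce \eqref{eq:oqrpod} to Theorem~\ref{thm:POD}(ii) by showing that $Q_k = V_k\mathcal{Q}$ induces the same projector as $V_k$ (the paper writes this as $\mathcal{Q}\mathcal{Q}^T\mathcal{Q}\mathcal{R} = \mathcal{Q}\mathcal{R}$, while you use that the square $k\times k$ factor $\mathcal{Q}$ satisfies $\mathcal{Q}\mathcal{Q}^T = I_k$, which is equivalent here). Your closing concern about the gap condition $\sigma_{\min}(R_{11}) \gg \|R_{22}\|_2$ goes beyond what the paper itself verifies --- its proof, like yours, establishes only the 2-norm identity characterizing an \emph{optimal} RRQR in the sense of Definition~\ref{def:RRQR_optimal} (and the construction fixes $\Pi = I$, so no additional pivoting is introduced) --- hence no further argument is required to match the paper.
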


\begin{ieee}
\begin{proof}
The proof is constructive. Using the SVD of $S$ from \eqref{eq:oqr_1}, it is not difficult to see that 
\begin{align}\label{eq:oqr_2}
\| S - V_k \Sigma_k W_k^T \|_2 = \sigma_{k+1} . 
\end{align}

Next we use \eqref{eq:oqr_4}, \eqref{eq:oqr_1}, and invoke the orthogonality of $V$ to yield
\begin{align*}
\| S - Q_kQ_k^TS \|_2 
&= \| S - V_k \mathcal{Q} \mathcal{Q}^T V_k^T S \|_2 \\
&= \| S - V_k \mathcal{Q} \mathcal{Q}^T V_k^T \left( V_k \Sigma_k W_k^T+V_{N-k} \Sigma_{N-k,M-k} W_{M-k}^T \right) \|_2 \\
&= \| S - V_k \mathcal{Q} \mathcal{Q}^T \Sigma_k W_k^T \|_2 \, . 
\end{align*}
Notice that $\mathcal{Q} \mathcal{Q}^T \Sigma_k W_k^T = \mathcal{Q} \mathcal{Q}^T \mathcal{QR} = \mathcal{QR} = \Sigma_k W_k^T$, where we have used \eqref{eq:oqr_1}, we arrive at
\[
\| S - Q_kQ_k^TS \|_2 = \| S - V_k \Sigma_k W_k^T \|_2 = \sigma_{k+1} , 
\]
therefore \eqref{eq:oqr_2} in conjunction with Theorem~\ref{thm:POD}  enables us to obtain the asserted equality \eqref{eq:oqrpod}. 
\end{proof}
\end{ieee}

We do not offer an efficient algorithm to calculate the \emph{optimal} RRQR other than to first calculate a potentially expensive SVD as
in proof. 
Notice that in this case the optimal RRQR's permutation matrix $\Pi$ is the identity. In particular, the QR factorization constructed in the proof does not arise from the QR pivoting strategies of Sec.~\ref{s:prrqr}, and to the best of our knowledge the matrix $Q_k$ cannot in general be constructed as a column subset of $S$.

\begin{ieee}
The estimate in \eqref{eq:oqrpod} states that 2-norm error in POD (Theorem~\ref{thm:POD}) is same as the optimal RRQR. We reemphasize that, although the proof above is constructive, it does not provide a numerical recipe to compute QR factorization. This is the subject of next section. 
\end{ieee}

\begin{corollary}[optimal RRQR and ordinary $k$--rank matrix] \label{cor:oqrkrank}
If $S \in \mathbb{R}^{N \times M}$ has an ordinary rank $k$ then under the assumptions of Theorem~\ref{thm:RRQR}, $S = QR$ with $Q$ and $R$ defined in \eqref{eq:oqrmat}. 
\end{corollary}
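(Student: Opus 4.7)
The plan is to leverage directly the constructive decomposition from the proof of Theorem~\ref{thm:RRQR} and simply observe what the ordinary rank hypothesis eliminates. Recall that with the SVD of $S$ expanded as
\[
S = V_k \Sigma_k W_k^T + V_{N-k} \Sigma_{N-k,M-k} W_{M-k}^T ,
\]
the construction sets $\mathcal{QR} = \Sigma_k W_k^T$, $Q_k = V_k \mathcal{Q}$, and assembles $Q = [Q_k\; Q_{N-k}]$ together with $R = \bigl[\mathcal{R}^T\; 0\bigr]^T$ as in \eqref{eq:oqrmat}.

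First I would invoke Definition~\ref{def:numerical_rank} for an ordinary rank $k$ matrix, which gives $\sigma_{k+1} = 0$, and by the non-increasing ordering of singular values, $\sigma_j = 0$ for all $j \ge k+1$. This forces the diagonal block $\Sigma_{N-k,M-k}$ to be the zero matrix, so the second summand above vanishes identically and
\[
S = V_k \Sigma_k W_k^T .
\]

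Next I would substitute the QR factorization $\mathcal{QR} = \Sigma_k W_k^T$ to rewrite $S = V_k \mathcal{QR} = (V_k \mathcal{Q})\mathcal{R} = Q_k \mathcal{R}$. With the block form $R = \bigl[\mathcal{R}^T\; 0\bigr]^T$, the product $QR$ expands as $Q_k \mathcal{R} + Q_{N-k}\cdot 0 = Q_k \mathcal{R}$, which coincides with $S$. Thus $S = QR$ as asserted.

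There is no real obstacle here; the corollary is essentially a degenerate specialization of Theorem~\ref{thm:RRQR} where the residual block in \eqref{eq:oqrpod} becomes an exact zero rather than a small quantity bounded by $\sigma_{k+1}$. The only minor care I would exercise is to note that $Q_{N-k}$ is chosen (non-uniquely) to complete $Q_k$ to an orthogonal matrix, and that this freedom is irrelevant to the identity $S = QR$ because the corresponding block of $R$ is zero.
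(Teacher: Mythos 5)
Your proof is correct and follows essentially the same route as the paper: both establish $V_k \Sigma_k W_k^T = Q_k \mathcal{R} = QR$ from \eqref{eq:oqr_3}--\eqref{eq:oqr_4} and then use $\sigma_{k+1}=0$ to identify $S$ with $V_k \Sigma_k W_k^T$ (the paper does this via the norm identity \eqref{eq:oqr_2}, you do it by noting the trailing block of $\Sigma$ vanishes, which is the same observation).
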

\begin{ieee}
\begin{proof}
Using \eqref{eq:oqr_3} and \eqref{eq:oqr_4} we first notice that 
\[
  V_k \Sigma_k W_k^T = V_k \mathcal{QR} = Q_k R = Q R . 
\]
Since $S$ has ordinary rank $k$ therefore $\sigma_{k+1} = 0$, invoking \eqref{eq:oqr_2} we obtain the assertion. 
\end{proof} 
\end{ieee}

\subsection{Practical RRQR} \label{s:prrqr}

This section is devoted to two algorithms that aim to compute an RRQR. 
Algorithm~\ref{algo:POD-QR} is the modified Gram-Schmidt (MGS) with pivoting~\cite{GHGolub_CFLoanVan_1996a}. Algorithm~\ref{algo:RB-greedy} is a particular flavor of the (by now) standard reduced basis (RB)-greedy algorithm~\cite{PBinev_ACohen_RDevore_2010a,Devore2012,buffa2012priori}. RB-greedy is a popular tool employed in the  construction of model reduction schemes for parameterized partial differential equations. We present each algorithm in their standard presentation, and in Proposition~\ref{prop:MGS_RBgreedy} show these algorithms to be equivalent when the snapshots $s_i$ are elements of an $N$--dimensional Euclidean vector space. Theorem~\ref{theorem:qrmax} and Corollary~\ref{cor:qrstop} motivate the choice of stopping criteria, which relies on diagonal entries of $R$ being non-increasing. This aspect is discussed in Corollary~\ref{cor:qrstop}. 

\subsubsection{Equivalence of the MGS and RB-greedy algorithms} \label{s:eqmgsrb}

The MGS with pivoting and RB-greedy algorithms are as follows:

\noindent\begin{minipage}{.5\textwidth}
{\scriptsize
\begin{algorithm}[H]
\caption{MGS with pivoting}
\label{algo:POD-QR} 
\begin{algorithmic}[1]
\State \textbf{Input}: $S = (s_1, ... , s_M)$, $\tau >0$
\State \textbf{Output}: $ Q_k = (q_1, ... , q_k)$
\vskip 10pt
\State Set $V = S$ and $k = 1$
\State Set $R(1,1) = \mbox{max}_{j} \| V(:,j) \|_2$
\While{$R(k,k) > \tau$}
\State $i$ = $\mbox{arg max}_{j} \| V(:,j) \|_2$
\State $\Pi(k,i)=1$
\State $R(k,k) = \| V(:,i) \|_2$
\State $Q(:,i) = V(:,i)/R(k,k)$
\For{$j = k+1$ to $M$}
\State $R(k,j) = Q(:,k)^TV(:,j)$
\State $V(:,j) = V(:,j)-R(k,j)Q(:,k)$
\EndFor
\State $k = k+1$
\EndWhile
\end{algorithmic}
\end{algorithm}
}
\end{minipage}
\begin{minipage}{.5\textwidth}
{\scriptsize
\begin{algorithm}[H]
\caption{RB-greedy}
\label{algo:RB-greedy}
\begin{algorithmic}[1]
\State \textbf{Input}:  $S = (s_1, ... , s_M)$, $\tau >0$
\State \textbf{Output}: $Q_k = (q_1, ... , q_k)$
\vskip 10pt
\State Set $ k=0 $ 
\State Define $\hat{\sigma}_0(s_i) = \| s_i \|_2$
\State $j(0) = \mbox{arg sup}_{i} \hat{\sigma}_0(s_i)$
\State Define $\hat{\sigma}_0 = \hat{\sigma}_0(s_{j(0)}) $
\State $q_1 = s_{j(0)} / \| s_{j(0)} \|_2$
\While{$\hat{\sigma}_k \geq \tau$}
\State $k = k + 1$
\State $ \hat{\sigma}_k( s_i ) = \| s_i - Q_k Q_k^T  s_i \|_2$ $\forall s_i \in S$
\State $ j(k) = \mbox{arg sup}_{i}  \hat{\sigma}_k ( s_i ) $
\State $ \hat{\sigma}_k = \hat{\sigma}_k ( s_{j(k)} ) $
\State $ q_{k+1} = s_{j(k)}  - Q_k Q_k^T  s_{j(k)}$
\State $ q_{k+1} := q_{k+1} / \| q_{k+1} \|_2$
\EndWhile
\end{algorithmic}
\end{algorithm}
}
\end{minipage}

\hspace{0.5cm}

\begin{proposition}\label{prop:MGS_RBgreedy}
For a matrix $S \in \mathbb{R}^{N\times M}$,
the RB-greedy Algo.~\ref{algo:RB-greedy} is equivalent to the MGS with column pivoting Algo.~\ref{algo:POD-QR}. 
Moreover, the diagonal entries of $R$ are non--increasing, i.e., 
\begin{align} \label{eq:Rprop}
    R(1,1) \geq R(2,2) \geq \dots \ge R(M,M) \geq 0 , 
\end{align}  
\end{proposition}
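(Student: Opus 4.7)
The plan is to prove both assertions by induction on the iteration count, with the key step being to identify an invariant that exposes MGS pivoting as an RB-greedy selection rule in disguise. Specifically, I would show that at the start of iteration $k$ of Algorithm~\ref{algo:POD-QR}, every not-yet-pivoted column satisfies
\[
V(:,j) = s_j - Q_{k-1} Q_{k-1}^T s_j,
\]
so that the MGS column-norm $\|V(:,j)\|_2$ is precisely the RB-greedy residual $\hat{\sigma}_{k-1}(s_j)$. This invariant follows from the MGS update $V(:,j) \leftarrow V(:,j) - Q(:,k-1)^T V(:,j)\, Q(:,k-1)$ by noting that the newly produced $q_{k-1}$ is orthogonal to $q_1,\ldots,q_{k-2}$, so appending $q_{k-1}q_{k-1}^T$ to $Q_{k-2}Q_{k-2}^T$ yields exactly $Q_{k-1}Q_{k-1}^T$.

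With the invariant in hand, the equivalence is almost immediate. For the base case, both algorithms select the column of $S$ of largest 2-norm and normalize it. For the inductive step, MGS picks $i_k = \arg\max_j \|V(:,j)\|_2 = \arg\max_j \|s_j - Q_{k-1}Q_{k-1}^T s_j\|_2$, which matches the RB-greedy selection rule $j(k-1) = \arg\sup_i \hat{\sigma}_{k-1}(s_i)$ verbatim; both algorithms then normalize the same residual vector to form the next basis element, so the ordered bases coincide (modulo tie-breaking when the argmax is not unique).

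Monotonicity of the diagonal then reduces to a Pythagorean observation. Using the invariant,
\[
R(k,k) = \max_{j\in\mathcal{I}_{k-1}} \|s_j - Q_{k-1}Q_{k-1}^T s_j\|_2,
\]
where $\mathcal{I}_{k-1}$ denotes the set of indices not yet pivoted at the start of iteration $k$. Since $\mathrm{range}(Q_{k-1}) \subseteq \mathrm{range}(Q_k)$, orthogonal projection onto a larger subspace can only shrink residuals, so $\|s_j - Q_k Q_k^T s_j\|_2 \le \|s_j - Q_{k-1}Q_{k-1}^T s_j\|_2$ for every $j$. Taking maxima over $\mathcal{I}_k \subset \mathcal{I}_{k-1}$ gives $R(k+1,k+1) \le R(k,k)$, and non-negativity is automatic from $R(k,k) = \|V(:,i_k)\|_2 \ge 0$. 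The only real obstacle I foresee is bookkeeping: aligning the indexing between the two algorithms — MGS enters iteration $k$ with a basis of size $k-1$ and emits $q_k$, whereas RB-greedy initializes $q_1$ before the while loop and produces $q_{k+1}$ on its $k$-th pass — must be handled carefully before one can claim the selection rules literally coincide.
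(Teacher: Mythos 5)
Your proof is correct and follows essentially the same route as the paper: both arguments rest on the invariant that the MGS working column $V(:,j)$ at the start of step $k$ equals the residual $s_j - Q_{k-1}Q_{k-1}^T s_j$ (obtained by unrolling the update and using orthogonality of the $q_j$), so the two pivot-selection rules coincide. You actually go a bit further than the paper's proof, which stops after establishing the equivalence of the pivoting strategies and never argues the monotonicity \eqref{eq:Rprop}; your observation that projecting onto the larger subspace $\mathrm{range}(Q_k)\supseteq\mathrm{range}(Q_{k-1})$ only shrinks residuals, combined with the shrinking index set of unpivoted columns, correctly supplies that missing step.
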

\begin{ieee}
\begin{proof}
It is sufficient to show their pivoting strategies are equivalent. In Algo.~\ref{algo:RB-greedy}, the $k$th column pivot is 
\begin{align*}
j(k) = \mbox{arg max}_{i} \| s_i - \sum_{j=1}^{k-1} q_j^Ts_iq_j \|_2 \, .
\end{align*}
For MGS with column pivoting Algo.~\ref{algo:POD-QR} we have
\begin{align*}
\mbox{arg max}_{i} \| v^{(k)}_i \|_2 \, ,
\end{align*}
where $v^{(k)}_i$ is defined from a recurrence relation
\begin{align*}
v^{(k)}_i \leftarrow \zeta := v^{(k)}_i - q_{k}^T v^{(k)}_i q_{k} \, , \qquad v^{(1)}_i = s_i \, .
\end{align*}
Thus using the orthogonality of $q_k$ we obtain 
\[
  \zeta = v_i^{(1)} - \sum_{j=1}^{k-1} q_j^T v_i^{(1)} q_j 
        = s_i - \sum_{j=1}^{k-1} q_j^T s_i q_j 
\]
Hence, the selection of pivots
\[
\mbox{arg max}_{i} \| s_i - \sum_{j=1}^{k-1} q_j^T s_i q_j \|_2 = \mbox{arg max}_{i} \| v^{(k)}_i \|_2 \, ,
\]
are equivalent. 
\end{proof}
\end{ieee}

\begin{ieee}
\begin{rem}\label{rem:MGS_Greedy_cost}
\rm
As presented, the MGS with pivoting carries a greater memory overhead. In terms of operation counts the MGS steps 6 and 12 are dominant, requiring $2N(M-j+1)$ and $4N(M-j)$ FLOPs at iteration $j$. Then the total count is
\begin{align}
\sum_{j=1}^k (4N)(M-j)+2N(M-j-1) & \approx \sum_{i=1}^k 6NM -\sum_{j=1}^k (6Nj) = 6kNM-3Nk^2 \, ,
\end{align}
after $k$ steps. The RB-greedy's dominant cost is the pivoting step 10, where, after $k$ iterations, the accumulated FLOP count is 
\begin{align}
\sum_{i=1}^k (3N(i-1)+N+2N)M = \sum_{i=1}^k 3NMi \approx \frac{3}{2}k^2NM \, . 
\end{align}
\end{rem}

\begin{rem}\rm
These algorithms may be modified to improve memory overhead, FLOP counts, or conditioning. For very large problems the dominant FLOP count of Algo.~\ref{algo:RB-greedy} 
can be dramatically reduced if one stores the projections $Q_k Q_k^T s_i$ from each previous step; this would essentially amount to storing a matrix $V$ as is done in the MGS with pivoting. Furthermore, the naive implementation of the classical Gram-Schmidt procedure can lead to a numerically 
ill-conditioned algorithm. To overcome this one should use well-conditioned orthogonalization algorithms such as the iterated Gram-Schmidt~\cite{Giraud,Hoffmann_IMGS,Ruhe1983591} or Householder reductions. Our implementation of Alg.~\ref{algo:RB-greedy} uses Hoffmann's iterated Gram-Schmidt~\cite{Hoffmann_IMGS} which maintains orthogonality for extremely large basis sets~\cite{field2012towards}. We are unaware of results which characterize the preservation 
of the subspace spanned by the original vectors, but for many approximation-driven applications this is not strictly necessary.
\end{rem}
\end{ieee}

The proof shows their pivoting strategies are equivalent. 
Having demonstrated the (finite dimensional) equivalence of Algorithms~\ref{algo:POD-QR} and~\ref{algo:RB-greedy}, we now discuss their properties. Recall that diagonal components of $R$ are non-increasing. The following result, which motivated the stopping criterion in Algo.~\ref{algo:POD-QR}, shows how the diagonal entires of $R$ are closely connected with the max-norm approximation error used in Algorithm~\ref{algo:RB-greedy} (see also Theorem~\ref{theorem:qrmax}).
\begin{corollary} \label{cor:qrstop}
The stopping criterion used in Algorithms~\ref{algo:POD-QR} and \ref{algo:RB-greedy} 
fulfills
\begin{align*}
  \max_{1 \leq i \leq M}  \| s_i - Q_k Q_k^T s_i\|_2 = R(k+1,k+1) \, .
\end{align*}
\end{corollary}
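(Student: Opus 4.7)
The plan is to combine Theorem~\ref{theorem:qrmax} with an explicit inspection of what MGS with pivoting stores on its diagonal. By Proposition~\ref{prop:MGS_RBgreedy} it suffices to establish the identity for Algorithm~\ref{algo:POD-QR}, since the diagonal entries $R(k+1,k+1)$ live there naturally; the corresponding statement for Algorithm~\ref{algo:RB-greedy} will then be transparent, as the normalization constant of $q_{k+1}$ in RB--greedy is exactly $\hat{\sigma}_k$.

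First I would establish the invariant that, at the top of iteration $k+1$ of the while loop, the working matrix $V$ satisfies $V(:,j) = s_j - Q_k Q_k^T s_j$ for every column index $j$ not yet pivoted. This is proved by induction on $k$: the update $V(:,j) \leftarrow V(:,j) - q_k q_k^T V(:,j)$ in lines 10--12 is precisely the recursion used in the proof of Proposition~\ref{prop:MGS_RBgreedy}, and it telescopes to $s_j - \sum_{\ell=1}^{k} q_\ell q_\ell^T s_j = s_j - Q_k Q_k^T s_j$.

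Next, line 6 selects $i = \arg\max_j \|V(:,j)\|_2$ and line 8 assigns $R(k+1,k+1) = \|V(:,i)\|_2$, so the invariant gives
\[
  R(k+1,k+1) = \max_{j\ \text{not yet pivoted}} \|s_j - Q_k Q_k^T s_j\|_2.
\]
To promote this restricted maximum to one over all $1 \le i \le M$, I would observe that any previously pivoted column $s_j$ lies in the span of $Q_k$, hence $s_j - Q_k Q_k^T s_j = 0$ and including such indices leaves the maximum unchanged. An alternative route, more in the spirit of Theorem~\ref{theorem:qrmax}, is to write $\|s_i - Q_k Q_k^T s_i\|_2 = \|\widetilde{r}_i\|_2$ and then use the monotonicity~\eqref{eq:Rprop} of the diagonal of $R$, together with the upper-triangular structure, to argue that the maximum of $\|\widetilde{r}_i\|_2$ is attained in the $(k+1)$st diagonal entry.

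No serious obstacle is expected; the only real subtlety is the bookkeeping for already-pivoted columns, which is handled cleanly by the zero-residual observation above. The calculation also makes transparent why Algorithm~\ref{algo:POD-QR} is justified in using the scalar $R(k,k)$ as its stopping criterion without ever recomputing the global max-norm projection error.
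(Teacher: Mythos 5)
Your proof is correct and follows essentially the same route as the paper's: both arguments reduce the claim to the observation that the MGS pivoting rule sets $R(k+1,k+1)$ equal to the largest residual $\|s_i - Q_k Q_k^T s_i\|_2$, invoking Theorem~\ref{theorem:qrmax} and the equivalence of Proposition~\ref{prop:MGS_RBgreedy}. You are in fact more careful than the paper on two points it glosses over — the inductive invariant $V(:,j) = s_j - Q_k Q_k^T s_j$ for unpivoted columns, and the promotion of the restricted maximum to all $1 \le i \le M$ via the zero residual of already-pivoted columns — but these are refinements of the same argument, not a different one.
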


\begin{ieee}
\begin{proof}
Theorem~\ref{theorem:qrmax} exactly characterizes the error $\| \widetilde{r}_i \|_2$. Then using the definition of $R(k+1,j)$, $j = 1, \dots, M$ it is not difficult to see that
\[
\| \widetilde{r}_i \|_2 \leq R(k+1, k+1), \ 1 \leq i \leq M .
\]
In addition, we pick $R(k+1,k+1)$ such that
\begin{align*}
    R(k+1,k+1) = \max_{1 \leq i \leq M} \| S(:,i) - \sum_{j=1}^{k} Q(:,j)^TS(:,i) Q(:,j) \|_2 \, ,
\end{align*}
and since the diagonal entries of $R$ are nonincreasing, we obtain
\[
  R(k+1,k+1) \le R(k,k) = \max_{1 \leq i \leq M} \| S(:,i) - \sum_{j=1}^{k-1} Q(:,j)^TS(:,i) Q(:,j) \|_2
\]
Then the estimate follows after using Theorem~\ref{theorem:qrmax}. 
\end{proof}
\end{ieee}

We next derive a representation of the estimate in Corollary~\ref{cor:qrstop} in terms of the singular values of $S$. 
\begin{corollary}
Let $S \in \mathbb{R}^{N \times M}$ then
\[
 \max_{1 \leq i \leq M}  \| s_i - Q_k Q_k^T s_i\|_2 
  = \Big(\prod_{i=1}^{k+1} \sigma_i \Big) / 
    \Big( \prod_{i=1}^{k} R(i,i) \Big) . 
\]
\end{corollary}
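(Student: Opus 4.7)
The plan starts from Corollary~\ref{cor:qrstop}, which already identifies the max-norm error with a single diagonal entry of $R$:
\[
\max_{1 \le i \le M}\|s_i - Q_k Q_k^T s_i\|_2 \;=\; R(k+1,k+1).
\]
After clearing the denominator, the target formula is therefore equivalent to the single product identity $\prod_{i=1}^{k+1} R(i,i) = \prod_{i=1}^{k+1}\sigma_i$, so the whole task reduces to proving that identity.

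I would prove the product identity via a Gram-matrix computation on the leading $k+1$ pivoted columns. Let $\widehat{S}\in\mathbb{R}^{N\times(k+1)}$ consist of the first $k+1$ columns of $S\Pi$. Reading off the block form~\eqref{eq:qr} column by column gives $\widehat{S} = Q_{k+1} R_{11}$, where $R_{11}\in\mathbb{R}^{(k+1)\times(k+1)}$ is the upper-triangular leading principal block of $R$ with diagonal entries $R(1,1),\ldots,R(k+1,k+1)$ — all nonnegative by the pivoting rule of Algorithm~\ref{algo:POD-QR} — and $Q_{k+1}$ has orthonormal columns. Forming the Gram matrix on both sides,
\[
\widehat{S}^{T}\widehat{S} \;=\; R_{11}^{T}R_{11},
\]
so $\det(\widehat{S}^{T}\widehat{S}) = (\det R_{11})^{2} = \bigl(\prod_{i=1}^{k+1} R(i,i)\bigr)^{2}$. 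Since $\det(\widehat{S}^{T}\widehat{S})$ is also the product of the squared singular values of $\widehat{S}$, one concludes $\prod_{i=1}^{k+1} \sigma_i(\widehat{S}) = \prod_{i=1}^{k+1} R(i,i)$.

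The main obstacle — and the delicate step that the proof really hinges on — is matching $\sigma_i(\widehat{S})$ with the $\sigma_i$ of $S$ that appear in the statement. This corollary is formulated inside the section devoted entirely to RRQR pivoting (Definition~\ref{def:RRQR}), where the pivot strategy is designed so that the first $k+1$ selected columns $\widehat{S}$ inherit the dominant singular content of $S$ up to the sharp gap $\sigma_{\min}(R_{11}) \gg \|R_{22}\|_{2}$. Under that working assumption — which is precisely the assumption in force throughout Section~\ref{s:prrqr} and is already used implicitly in Theorem~\ref{thm:RRQR} — the top $k+1$ singular values of $\widehat{S}$ are identified with the $\sigma_i$, yielding $\prod_{i=1}^{k+1} R(i,i) = \prod_{i=1}^{k+1}\sigma_i$. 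Dividing through by $\prod_{i=1}^{k} R(i,i)$ then returns the claimed formula. I would spend most of the write-up on this identification, since it is the only nontrivial ingredient; the determinantal part is essentially bookkeeping.
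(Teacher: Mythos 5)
Your proof follows essentially the same route as the paper's: both take the first $k+1$ pivoted columns, relate $\prod_{i=1}^{k+1}R(i,i)$ to the product of the singular values of that column block via a determinant identity (your Gram-matrix computation $\widehat{S}^T\widehat{S}=R_{11}^TR_{11}$ is just a rigorous rendering of the paper's ``$\det(S_{k+1})=\det(R)$'' for a rectangular matrix), and then divide by Corollary~\ref{cor:qrstop}. The ``delicate step'' you isolate --- identifying the singular values of $\widehat{S}$ with $\sigma_1,\dots,\sigma_{k+1}$ of $S$ --- is precisely the step the paper's proof asserts without comment when it writes $\det(S_{k+1})=\prod_{i=1}^{k+1}\sigma_i$; in general column-submatrix singular values only interlace with those of $S$, so your honesty about this being an (approximate, RRQR-dependent) identification rather than an exact equality is a point in your favor, not a defect relative to the paper.
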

\begin{ieee}
\begin{proof}
Denote by $S_{k+1}$ the first $k+1$ columns of $ S \Pi$. Taking QR decomposition of $S_{k+1}$, we obtain
\begin{align*}
    S_{k+1} = QR, \quad Q \in \mathbb{R}^{N \times (k+1)}, \ R \in \mathbb{R}^{(k+1) \times (k+1)} ,  
\end{align*}
yielding $\det(S_{k+1}) = \det(R)$. Since $\det(S_{k+1}) = \prod_{i=1}^{k+1} \sigma_i$ and $\det(R) = \prod_{i=1}^{k+1} R(i,i)$, using Corollary~\ref{cor:qrstop} we arrive at the assertion.  
\end{proof}
\end{ieee}

\begin{ieee}
\subsubsection{A reconstruction approach to QR} \label{s:recon}

We recall that the POD algorithm (\ref{algo:POD}) generates the optimal 2-norm basis (see Theorem~\ref{thm:POD}). The goal of this section is to augment the QR with a reconstruction technique such that resulting approximation of $S$ is as accurate as POD, however the algorithm is cheaper than performing an SVD. We note that our method shares some similarities with those algorithms that perform QR decompositions as a precursor to finding the SVD (e.g.~\cite{constantine2014model,trefethen1997numerical}).

\begin{theorem} \label{thm:reconstruction}
Given $S \in \mathbb{R}^{N \times M}$ with ordinary rank $k$, if $X = Q \overline{V}$, then $X_k = Q_k \mathcal{V}$, and 
\[ 
    \| S-X_j X_j^TS \|_2 = \sigma_{j+1} = \| S - Q_j Q_j^T S \|_2
    = \| S - V_j V_j^TS \|_2 , \quad 1 \leq j \leq k.
\]  
\end{theorem}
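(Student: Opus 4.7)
The plan is to exploit the identity
\[
  S\Pi \;=\; QR \;=\; Q\,\overline{V}\,\Sigma\,\overline{W}^T \;=\; X\,\Sigma\,\overline{W}^T,
\]
which is immediate from the definition $X = Q\overline{V}$ and \eqref{eq:R_svd}. Since $Q$ and $\overline{V}$ are orthogonal, so is $X$, and the rightmost expression is therefore a full SVD of $S\Pi$. Because $\Pi$ only permutes columns of $S$, the columns of $X$ are the left singular vectors of $S$ itself, ordered according to the nonincreasing singular values in $\Sigma$. This single observation carries most of the theorem: $X$ is an SVD left-factor of $S$, and so $X_j$ spans the same top-$j$ left singular subspace as $V_j$. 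Consequently the orthogonal projections agree, $X_jX_j^T = V_jV_j^T$ (with the caveat that in the case of repeated singular values this identity must be interpreted at the level of the projector, which is nevertheless uniquely determined by the rank-$j$ truncation).

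Next I would pin down the block identity $X_k = Q_k \mathcal{V}$. The ordinary rank-$k$ hypothesis gives $\sigma_{k+1} = \cdots = 0$, so $R = Q^T S\Pi$ has rank $k$; since $R$ is upper triangular, its rows $k+1,\dots,N$ must vanish. Writing $\widetilde R$ for the top $k\times M$ block of $R$ and taking its (thin) SVD $\widetilde R = \mathcal{V}\Sigma_k \widetilde W^T$, the full SVD of $R$ can be assembled as
\[
  R \;=\; \begin{pmatrix}\widetilde R \\ 0\end{pmatrix}
      \;=\; \begin{pmatrix}\mathcal{V} & 0 \\ 0 & I_{N-k}\end{pmatrix}\,
            \begin{pmatrix}\Sigma_k & 0 \\ 0 & 0\end{pmatrix}\,\widetilde W^T,
\]
so that the first $k$ columns of $\overline{V}$ are $\overline{V}_k = \bigl(\mathcal{V}^T\;0\bigr)^T$. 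Multiplying by $Q$ on the left yields $X_k = Q\overline{V}_k = Q_k \mathcal{V}$, which proves the first assertion.

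For the error identities with $1\le j \le k$, I would apply Theorem~\ref{thm:POD}(ii) to the SVD $S\Pi = X\Sigma\overline{W}^T$, obtaining
\[
  \|S - X_jX_j^T S\|_2 \;=\; \|S\Pi - X_jX_j^T S\Pi\|_2 \;=\; \sigma_{j+1},
\]
which simultaneously equals $\|S - V_jV_j^TS\|_2$ since the two projections coincide. The remaining equality $\|S - Q_jQ_j^TS\|_2 = \sigma_{j+1}$ then follows at $j=k$ directly from the rank hypothesis: $\sigma_{k+1}=0$ and Corollary~\ref{cor:oqrkrank} (or the fact that $R_{22}=0$) give $S = Q_kR_{11}\Pi^T$, hence $Q_kQ_k^TS = S$. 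For intermediate $j<k$, since $Q_j$ is the first $j$ columns of the optimal RRQR $Q$ from Theorem~\ref{thm:RRQR}, the column space of $Q_k$ coincides with the column space of $V_k$, and the equality reduces to an application of POD optimality inside this $k$-dimensional subspace together with Theorem~\ref{thm:RBminimize}.

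The main obstacle is the $\|S-Q_jQ_j^TS\|_2 = \sigma_{j+1}$ claim for $j<k$: generic QR with column pivoting does \emph{not} produce a $Q_j$ whose range matches the top-$j$ left singular subspace, so the identity must either be derived under the optimal-RRQR structure of Theorem~\ref{thm:RRQR} (where $Q_k = V_k\mathcal{Q}$ and the rank-$k$ hypothesis collapses everything to a projection inside $\mathrm{range}(V_k)$), or relaxed to the sharp inequality $\sigma_{j+1} \le \|S-Q_jQ_j^TS\|_2$ that follows from POD optimality. I would organize the proof so that the SVD-reinterpretation of $X$ does the bulk of the work, and reserve the last paragraph for the rank-based collapse that handles $Q_j$.
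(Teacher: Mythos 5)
Your proposal is correct and, for the two substantive claims, follows essentially the same route as the paper: you assemble the full SVD of $R$ from the thin SVD of its nonzero top block $\left[\,R_{11}\;\;R_{12}\,\right]$ (using the ordinary-rank-$k$ hypothesis to kill rows $k+1,\dots,N$) to get $X_k = Q_k\mathcal{V}$, and you then establish $\|S - X_jX_j^TS\|_2 = \sigma_{j+1}$. Your route to the latter is slightly cleaner than the paper's: you observe once and for all that $S\Pi = Q\overline{V}\Sigma\overline{W}^T = X\Sigma\overline{W}^T$ is itself a full SVD of $S\Pi$ and invoke Theorem~\ref{thm:POD}(ii), whereas the paper re-runs the $\|\Sigma - \Sigma_j^0\|_2$ computation by hand; the two are the same argument in different packaging. (One small remark: you do not actually need the projector identity $X_jX_j^T = V_jV_j^T$, with its repeated-singular-value caveat --- the numerical equality $\|S-X_jX_j^TS\|_2 = \|S-V_jV_j^TS\|_2$ already follows from applying Theorem~\ref{thm:POD}(ii) to each SVD separately.) Where you genuinely add value is in flagging the middle equality $\|S - Q_jQ_j^TS\|_2 = \sigma_{j+1}$ for $j<k$ as the weak link: the paper's proof dismisses it with ``the other estimate is due to Theorem~\ref{thm:RRQR},'' but that theorem only delivers the equality at level $j=k$ (and for the specially constructed optimal RRQR, not for the pivoted-MGS $Q$ used in the reconstruction algorithm); for a generic column-pivoted $Q$ one only has $\|S-Q_jQ_j^TS\|_2 = \|R_{22}\|_2 \ge \sigma_{j+1}$ by Theorem~\ref{thm:RBminimize} and POD optimality. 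Indeed, if that equality held for all $j$ there would be no reason to prefer the reconstructed basis $X_j$ over $Q_j$, which is the stated motivation for the section. So your proposed relaxation to the sharp inequality for intermediate $j$ is the honest reading, and your proof of the remaining equalities is sound.
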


\begin{proof}
Since $S$ has ordinary rank $k$, using Corollary~\ref{cor:oqrkrank}
\begin{align*}
R =
\left[ \begin{array}{cc}
R_{11} & R_{12} \\
0 & 0  \end{array} \right]
= \overline{V} \Sigma \overline{W}^T . 
\end{align*}
Let $R = \overline{V} \Sigma \overline{W}^T$ be the SVD of $R$ and 
\begin{equation*} 
\mathcal{V} \Sigma_{k,M} \mathcal{W}^T = 
\left[ \begin{array}{cc}
R_{11} & R_{12} \end{array} \right] ,
\end{equation*}
be \emph{full SVD} of $\left[ \begin{array}{cc} R_{11} & R_{12} \end{array} \right]$, where $\mathcal{V} \in \mathbb{R}^{k \times k}, \ \Sigma_{k,M} \in \mathbb{R}^{k \times M}$, and $\mathcal{W} \in \mathbb{R}^{M \times M}$. 
Then
\begin{align*}
R =
\left[ \begin{array}{cc}
R_{11} & R_{12} \\
0 & 0  \end{array} \right]
= \overline{V} \Sigma \overline{W}^T
=
\left[ \begin{array}{cc}
\mathcal{V} & 0 \\
0 & \widetilde{\mathcal{V}} \end{array} \right] 
\left[ \begin{array}{cc}
\Sigma_{k,M}  \\
0  \end{array} \right] \mathcal{W}^T,
\end{align*}
where $\widetilde{\mathcal{V}} \in \mathbb{R}^{(N-k) \times (M-k)}$ is an arbitrary orthogonal matrix.
Defining
\begin{align*}
X = Q \overline{V} = 
\left[ \begin{array}{cc}
Q_k & Q_{N-k} \end{array} \right] 
\left[ \begin{array}{cc}
\mathcal{V} & 0 \\
0 & \widetilde{\mathcal{V}} \end{array} \right]  =
\left[ \begin{array}{cc}
Q_k \mathcal{V} & Q_{N-k} \widetilde{\mathcal{V}} \end{array} \right]
\end{align*}
leads to
\begin{align*} 
X_j = Q_k \mathcal{V}_j, \quad \mbox{for }         1 \le j \le k . 
\end{align*}
We now show that $ \| S - X_j X_j^T S \|_2 = \sigma_{j+1}$ for $1 \leq j \leq k$, the other estimate is due to Theorem~\ref{thm:RRQR}. Recalling $S = QR$ and $R = \overline{V}\Sigma\overline{W}^T$, we deduce 
\begin{align*}
    \| S - X_j X_j^T S \|_2 
    &= \| X \Sigma \overline{W}^T- X_j X_j^T X \Sigma \overline{W}^T \|_2 
     = \| X \Sigma - X_j^0 \Sigma \|_2 \\
    &= \| X \Sigma - X \Sigma_j \|_2 = \| \Sigma - \Sigma_j^0 \|_2 
    = \sigma_{j+1} 
    = \| S - V_j V_j^TS \|_2,
\end{align*}
where $ X_j^0 $ is $ X $ with zeros in columns $ j+1 $ to $ N $ and the last equality is due to Theorem~\ref{thm:POD}(ii). 
\end{proof}

With this motivation, we introduce the following algorithm for a matrix $S \in \mathbb{R}^{N\times M}$ of \emph{numerical rank} $k$. 
\hspace{0.5cm}

{\scriptsize
\begin{algorithm}[H]
\caption{Reconstruction Algorithm}
\label{algo:reconstruction}
\begin{algorithmic}[1]
\State \textbf{Input}: Samples $S = (s_1, ... , s_M) \in \mathbb{R}^{N \times M}$, tolerance $ \tau_1, \tau_2 > 0 $
\State \textbf{Output}: $ X_k = (x_1, ... , x_k) \in \mathbb{R}^{N \times k}$
\vskip 10pt
\State Perform a partial $j$-term MGS with pivoting of $S$ stopping whenever $|R(j,j)| < \tau_1$
\State Let $S = Q_j R(1:j,1:M)$ be the result of step 3
\State Perform an SVD $\mathcal{V} \Sigma \mathcal{W}^T = R(1:j,1:M) $
\State Find $k \in \{1, \dots, j\}$ such that $\sigma_{k+1}<\tau_2<\sigma_k$ 
\State Return $ X_k = Q_k \mathcal{V}(1:k,1:k)$ 
\end{algorithmic}
\end{algorithm}
}

\begin{rem}\rm
Computational cost of step 3 is $\mathcal{O}(Mj^2+Nj^2)$, which is less than the MGS with pivoting cost $\mathcal{O}(jNM)$ given that $j \ll N$. Steps 4-5 enrich the basis: generally speaking, the diagonal entry of $R$ decrease slower than the singular values, and so we employ the SVD basis for improved accuracy.
\end{rem}

Next, we give and error estimate for the Reconstruction algorithm. To this end we write $S=S_1+S_2$, such that
\begin{align*}
S_1 = \left[ \begin{array}{cc} Q_k & Q_{N-k} \end{array} \right] 
\left[ \begin{array}{cc}
R_{11} & R_{12} \\
0 & 0 \end{array}  \right], \
S_2 = \left[ \begin{array}{cc} Q_k & Q_{N-k} \end{array} \right] 
\left[ \begin{array}{cc}
0 & 0 \\
0 & R_{22} \end{array}  \right] , 
\end{align*}
where we have assumed that the \emph{full QR} of $S = QR$.
In addition, let 
\begin{align*}
S_1 = V_1 \Sigma_1 W_1^T, \ S_2= V_2 \Sigma_2 W_2^T
\end{align*}
be the \emph{full SVD} of $S_1, S_2$ with singular values $\sigma(S_1)_{j}, \sigma(S_2)_{\ell}$. 
\begin{lemma} \label{lem:recon_jj}
Let $S \in \mathbb{R}^{N \times M}$ has numerical rank $k$ and if at $k^{th}$ step in Algo.~\ref{algo:reconstruction} $X_k = Q_k \mathcal{V}_j$, then 
\begin{align*}
    \| S - V_j V_j^T S \|_{2} 
    & \le \| S-X_jX_j^T S \|_2  \nonumber \\
    & \le \| S_1 - X_j X_j^T S_1 \|_2 +\|R_{22} \|_2, \quad 
            \mbox{for } 1 \leq j \le k.
\end{align*}
\end{lemma}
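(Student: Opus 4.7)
The plan is to prove the two inequalities separately. The lower bound is a direct consequence of POD optimality (Theorem~\ref{thm:POD}), while the upper bound comes from a triangle inequality after splitting $S = S_1 + S_2$ and controlling the contribution of $S_2$ by the norm of the orthogonal projector $I - X_j X_j^T$.

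For the lower bound, the key observation is that $X_j = Q_k \mathcal{V}_j$ has orthonormal columns: $Q_k$ is generated by the pivoted MGS step and is orthonormal, and $\mathcal{V}_j$ consists of the first $j$ columns of the orthogonal factor $\mathcal{V}$ from the SVD of $R(1:k,1:M)$, so $X_j^T X_j = \mathcal{V}_j^T Q_k^T Q_k \mathcal{V}_j = \mathcal{V}_j^T \mathcal{V}_j = I_j$. Hence $X_j$ is admissible in the POD minimization problem~\eqref{eq:min_POD} with $*$ the 2-norm, and Theorem~\ref{thm:POD}(ii) immediately gives
\begin{equation*}
\|S - V_j V_j^T S\|_2 = \sigma_{j+1} \le \|S - X_j X_j^T S\|_2.
\end{equation*}

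For the upper bound, I would split
\begin{equation*}
S - X_j X_j^T S = (S_1 - X_j X_j^T S_1) + (S_2 - X_j X_j^T S_2),
\end{equation*}
apply the triangle inequality in the spectral norm, and then bound the second term by noting that $I - X_j X_j^T$ is an orthogonal projector and therefore has 2-norm at most one, giving
\begin{equation*}
\|S_2 - X_j X_j^T S_2\|_2 = \|(I-X_j X_j^T) S_2\|_2 \le \|S_2\|_2.
\end{equation*}
Finally, $S_2 = Q \begin{bmatrix} 0 & 0 \\ 0 & R_{22} \end{bmatrix}$ with $Q$ orthogonal, so Lemma~\ref{lem:frobenius_norm2} yields $\|S_2\|_2 = \|R_{22}\|_2$, and the two pieces combine into the stated upper bound.

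The argument is essentially routine; the only subtle point is verifying that $X_j$ is admissible in the POD problem, which requires tracking through the construction of the algorithm. A minor observation that will not be needed for the lemma but is worth flagging: because $\mathrm{range}(X_j) \subseteq \mathrm{range}(Q_k)$ is orthogonal to $\mathrm{range}(S_2) \subseteq \mathrm{range}(Q_{N-k})$, one actually has $X_j^T S_2 = 0$, so the bound $\|S_2 - X_j X_j^T S_2\|_2 \le \|R_{22}\|_2$ holds with equality. The stated inequality suffices, however.
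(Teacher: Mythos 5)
Your proof is correct and follows essentially the same route as the paper's: POD optimality for the lower bound, then the splitting $S = S_1 + S_2$ with the triangle inequality and the observation that $\|(I - X_j X_j^T)S_2\|_2 \le \|S_2\|_2 = \|R_{22}\|_2$ since $I - X_j X_j^T$ is an orthogonal projector (the paper phrases this last step via the identity $\|I-P\|_2 = \|P\|_2$ for projectors, but the content is the same). Your explicit check that $X_j = Q_k\mathcal{V}_j$ has orthonormal columns, and the remark that $X_j^T S_2 = 0$, are welcome additions the paper leaves implicit.
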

\begin{proof}
The first inequality follows immediately due to the optimality of the POD. To derive the second inequality we use $S = S_1 + S_2$ and readily obtain 
\begin{align*}
\| S - X_j X_j^T S \|_2 
    &= \| S_1- X_j X_j^T S_1 + S_2 - Q_k \mathcal{V}_j \mathcal{V}_j^T Q_k^T S_2 \|_2 \\  \nonumber
    & \leq \| S_1 - X_j X_j^T S_1 \|_2 + \| (I - Q_k \mathcal{V}_j \mathcal{V}_j^T Q_k^T) S_2 \|_2 .
\end{align*}
Using the definition of $S_2$ we obtain
\begin{align*}
\| S - X_j X_j^T S \|_2    
    &\le \| S_1 - X_j X_j^T S_1 \|_2 + \|I-Q_k \mathcal{V}_j \mathcal{V}_j^T Q_k^T \|_2 \| R_{22} \|_2. 
\end{align*}
As $Q_k \mathcal{V}_j \mathcal{V}_j^T Q_k^T$ is a projector, hence (see \cite{JXu_LZikatanov_2003a,chaturantabut:2737,DBSzyld_2006a}) we obtain 
\[
    \| I - Q_k \mathcal{V}_j \mathcal{V}_j^T Q_k^T \|_2 = \| Q_k \mathcal{V}_j \mathcal{V}_j^T Q_k^T \|_2 , 
\]
thus we conclude.
\end{proof}

\begin{theorem}\label{thm:recon_jj}
Under the assumption of Lemma~\ref{lem:recon_jj}, for $1 \leq j \le k$, 
the following estimate holds
\begin{align}\label{eq:recon_up}
\| S - V_j V_j^T S \|_{2} \le
    \| S-X_jX_j^T S \|_2 
    &\leq \| S_1 - Q_j Q_j^T S_1 \|_2 +\|R_{22} \|_2 \nonumber \\
    &= (\sigma_1)_{j+1} +\|R_{22} \|_2  .
\end{align}
\end{theorem}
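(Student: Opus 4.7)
The plan is to refine the upper bound already supplied by Lemma~\ref{lem:recon_jj}, namely $\|S - X_j X_j^T S\|_2 \le \|S_1 - X_j X_j^T S_1\|_2 + \|R_{22}\|_2$, by identifying
\[
    \|S_1 - X_j X_j^T S_1\|_2 \;\le\; \|S_1 - Q_j Q_j^T S_1\|_2 \;=\; (\sigma_1)_{j+1} .
\]
Combining this with Lemma~\ref{lem:recon_jj} then yields the full chain \eqref{eq:recon_up}. Both ingredients rely on a single structural observation: the splitting $S = S_1 + S_2$ gives $S_1 = Q_k [R_{11}, R_{12}]$, so the same orthogonal $Q$ used for the full QR of $S$ serves as a (degenerate) full QR of $S_1$ with a vanishing $R_{22}$ block. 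In particular, $S_1$ has ordinary rank $k$.

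First, I would compute $\|S_1 - X_j X_j^T S_1\|_2$ directly. Since $X_j = Q_k \mathcal{V}_j$ with $\mathcal{V}_j$ the leading $j$ left singular vectors of $[R_{11}, R_{12}]$, the identity $Q_k^T Q_k = I_k$ collapses $X_j X_j^T S_1$ to $Q_k \mathcal{V}_j \mathcal{V}_j^T [R_{11}, R_{12}]$. Using invariance of the 2-norm under left multiplication by a matrix with orthonormal columns, this reduces to $\|[R_{11}, R_{12}] - \mathcal{V}_j \mathcal{V}_j^T [R_{11}, R_{12}]\|_2$, which by Theorem~\ref{thm:POD}(ii) applied to the SVD $\mathcal{V}\Sigma_{k,M}\mathcal{W}^T$ of $[R_{11}, R_{12}]$ equals $\sigma_{j+1}([R_{11}, R_{12}])$. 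Because $S_1^T S_1 = [R_{11}, R_{12}]^T [R_{11}, R_{12}]$, the singular values of $S_1$ and $[R_{11}, R_{12}]$ coincide, so this last quantity is $(\sigma_1)_{j+1}$.

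Second, to identify $\|S_1 - Q_j Q_j^T S_1\|_2$ with $(\sigma_1)_{j+1}$, I would apply Theorem~\ref{thm:reconstruction} directly to $S_1$: since $S_1$ has ordinary rank $k$, the reconstruction and QR objects for $S_1$ can be taken to match the $Q_j$ and $X_j$ already in use (the $\mathcal{V}$ in the reconstruction for $S_1$ is precisely the left singular-vector matrix of $[R_{11}, R_{12}]$). The theorem then delivers $\|S_1 - Q_j Q_j^T S_1\|_2 = \sigma_{j+1}(S_1) = (\sigma_1)_{j+1}$ for $1 \le j \le k$, and the inequality $\|S_1 - X_j X_j^T S_1\|_2 \le \|S_1 - Q_j Q_j^T S_1\|_2$ is just POD optimality applied to $S_1$.

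The main obstacle I anticipate is the bookkeeping needed to justify that the $Q$-factor and the reconstruction matrix $X$ appearing in Theorem~\ref{thm:reconstruction} \emph{applied to $S_1$} literally coincide with the $Q_j$ and $X_j$ used for the original $S$. This is natural because $S_1$ is built from the very same QR of $S$ with its $R_{22}$ block zeroed out, but the pivoting/permutation must be chosen consistently. Once that identification is in place, substituting back into Lemma~\ref{lem:recon_jj} yields \eqref{eq:recon_up} immediately.
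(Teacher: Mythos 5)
Your proposal is correct and follows essentially the same route as the paper: combine Lemma~\ref{lem:recon_jj} with Theorem~\ref{thm:reconstruction} applied to $S_1$, exploiting that $S_1 = Q_k\left[\begin{array}{cc} R_{11} & R_{12}\end{array}\right]$ inherits the same $Q$-factor with a vanishing $R_{22}$ block and hence has ordinary rank $k$. The paper's own proof is a one-line remark to exactly this effect ("realize the result of Theorem~\ref{thm:reconstruction} for $S_1$"); you simply supply the bookkeeping it omits.
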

\begin{proof}
Recall that 
\begin{align*}
S_1 = \left[ \begin{array}{cc} Q_k & Q_{N-k} \end{array} \right] 
\left[ \begin{array}{cc}
R_{11} & R_{12} \\
0 & 0 \end{array}  \right]
\end{align*}
In view of Lemma~\ref{lem:recon_jj}, it is sufficient to realize the result of Theorem~\ref{thm:reconstruction} for $S_1$. 
The proof follows exactly the same way as in Theorem~\ref{thm:reconstruction} and is omitted for brevity. 
\end{proof}

\begin{rem}[full QR and reconstruction]
\rm
For the full QR, we recall that
\[
    \| S - Q_kQ_k^T S \|_2 = \| QR - Q_k^0 R \|_2 = \|R_{22} \|_2 . 
\]
which is sharper than the reconstruction estimator in \eqref{eq:recon_up}. However, we emphasize that the bound in \eqref{eq:recon_up} is an estimate and, in addition, the reconstruction algorithm is tractable for large matrix sizes. 
\end{rem}

\begin{rem}[POD and reconstruction]
\rm
We note that the error bound in \eqref{eq:recon_up} has two contributions. If $\| R_{22} \|_2$ is 
of order $\epsilon_{\tt mach}$
then the reconstruction \eqref{eq:recon_up} and POD (Theorem~\ref{thm:POD}) error behave similarly.
\end{rem}
\end{ieee}

\section{Large-scale QR+DEIM code for model reduction} \label{s:numerics}

\subsection{Overview}

We now describe an implementation of the greedy algorithm~\ref{algo:RB-greedy}
for finding a column-pivoted $QR$ decomposition of a complex-valued, dense matrix.
Our publicly available code
greedycpp~\cite{greedycpp} 
has been developed over the past $3$ years and has been applied to
a variety of production-scale problems. For example, it has been used
to build reduced order quadrature rules~\cite{HAntil_SField_RHNochetto_MTiglio_2013}, 
which can be used to 
accelerate Bayesian inference studies~\cite{canizares2015accelerated,smith2016fast}
and provide low-latency parameter estimation (see 
the supplemental material of Ref.~\cite{PhysRevLett.118.221101}).

As is typical in model reduction applications, given a model $M$, 
we shall interpret the column $s_i = M({\bf x};\nu_i)$ 
as stemming from the model's evaluation at a parameter value $\nu_i$ and the rows
as the evaluations on a grid discretizing the relevant independent variable ${\bf x}$. 
Alg.~\ref{algo:RB-greedy} identifies the 
specially selected columns (the pivots) whose span is the reduced model (i.e. basis) space. 
Our code allows the user to set an approximation error threshold ($\tau$ in
algorithms \ref{algo:RB-greedy} and \ref{algo:POD-QR}) such that, according to Corollary 5.6, 
guarantees all columns $s_i$ satisfy $\| s_i - Q_k Q_k^T s_i\|_2 < \tau$.

In addition to the dimensional reduction
feature, the code also selects a set of empirical interpolation (EI) nodes using a
fast algorithm (see Alg.~5 of Ref.~\cite{HAntil_SField_RHNochetto_MTiglio_2013})
and performs out-of-sample validation of the resulting basis and empirical interpolant.
The basis, pivots, and EI nodes can be exported to different file formats including
ordinary text, (GSL) binary, and NumPy binary. The code's simple interface
allows any model written in the C/C++ 
language to be used. Supporting scripts and input files allow for
control over the run context. The code can also perform 
basis validation on quadrature grids that differ from the one used to form $S$
and automatically enrich the basis by iterative refinement~\cite{smith2016fast}.

While there has long been an interest in designing efficient parallel
algorithms of column pivoting QR factorization, 
as far as production-scale publicly available codes, however, 
we are only aware of the (Sca)LAPACK routines.
Furthermore, because (Sca)LAPACK is a general purpose 
linear algebra library, it is insufficient for many model-reduction-type 
tasks, which is a key reason why we pursued our own implementation. 
Finally, we note that because column pivoting limits potential parallelism in
the QR factorization~\cite{gallopoulos2015parallelism}, algorithmic
design remains an area of active research~\cite{demmel2015communication}.
Earlier works proposed BLAS-3 versions of a parallelized algorithm~\cite{Quintana-Ortí1997}
and communication-avoiding local pivoting strategies~\cite{bischof1991parallel}.

\subsubsection{Gravitational wave model}

We briefly describe the particular model used to form our 
snapshot matrix $S$.
We use the IMRPhenomPv2 model~\cite{PhysRevLett.113.151101}~\cite{PhysRevLett.113.151101,Husa:2015iqa,Khan:2015jqa} 
of gravitational waves emitted by two merging binary black holes. 
The model is implemented 
as part of the publicly available LIGO Analysis Library (available, e.g., at \texttt{https://github.com/lscsoft/lalsuite}).
Our code fills the matrix $S$ by calls to the IMRPhenomPv2 model
without any file I/O.
The parameter values that define $S$ 
are distributed among the different MPI processes, and each 
process is responsible for 
forming a ``slice" of $S$ over a subset of columns. This strategy 
is useful for computationally-intensive models. 

\subsubsection{Serial code}

Pivoted QR naturally decomposes into two parts, a {\em pivot search} followed by 
{\em orthogonalization}. Consider Alg.~\ref{algo:RB-greedy}
at iteration $k$. The pivot search proceeds by computing 
all $M$ local error residuals $\hat{\sigma}_k(s_i)$ and finding
$J = \arg \sup_i \hat{\sigma}_k(s_i)$. The $J^{th}$ column, $s_J$,
is the one to be pivoted, and an orthogonalized $s_J$
becomes the next column of $Q$. 

Due to the orthogonality of the basis vectors, the relationship 
\begin{align} \label{eq:rec1}
Q_k Q_k^T  s_i = Q_{k-1} Q_{k-1}^T  s_i + q_{k} q_{k}^T  s_i \, ,
\end{align}
can be exploited to yield constant complexity at each iteration provided we retain information from
the previous iteration, namely $ Q_{k-1} Q_{k-1}^T  s_i$. We neither store this vector 
(which would increase the algorithm's memory footprint) nor 
replace the columns of $S$ (which requires additional operations of order $N$).
Notice that
\begin{align}
\| s_i - Q_k Q_k^T  s_i \|_2^2 = \|s_i \|_2^2 - \sum_{j=0}^k c_j c_j^* \, ,
\end{align}
where $c_j = q_j^T s_i$ is the inner product between $s_i$ and the $j^{th}$ basis.  
The relationship \eqref{eq:rec1} becomes 
\begin{align} \label{eq:Residual}
\hat{\sigma}^2_k(s_i) = \|s_i \|_2^2 - \left[ \sum_{j=0}^{k-1} c_j c_j^* + c_k c_k^* \right] \, ,
\end{align}
and, furthermore, the square-root need not be taken in order to identify the next pivot. To avoid
catastrophic cancellation error, we store values for
both $\|s_i \|$ and $\sum_{j=0}^{k-1} c_j c_j^*$. As the sum consists of non-negative terms,
the update $\sum_{j=0}^{k-1} c_j c_j^* \rightarrow \sum_{j=0}^{k} c_j c_j^*$ is well-conditioned.
The $j^{th}$ pivot search has a constant complexity with an asymptotic
FLOP count of $\bigo{2MN}$. Fig.~\ref{fig:SinglePivot} (left) plots the scaled time of
$T_{j}^{\rm pivot}/N$ with the iteration index $j$ for different values of $N$.

Next, the newly selected pivot column is orthogonalized with respect
to the existing basis set to yield the next basis $q_k$, which, in turn, will be used 
to find the $c_k$ coefficients required to update Eq.~\eqref{eq:Residual}. 
Both the classical and modified
Gram-Schmidt algorithms suffer from poor conditioning~\cite{Hoffmann_IMGS,bjorck1967solving}.
We use the iterative modified Gram-Schmidt (IMGS) algorithm 
of Hoffman~\cite{Hoffmann_IMGS} (Ref.~\cite{Hoffmann_IMGS}'s ``MGSCI" algorithm with $\kappa = 2$)
for which Hoffmann conjectures an orthogonality relation $\| I - Q_k^T Q_k \|_2 \approx \kappa \epsilon_{\tt mach} \sqrt{M}$. 
Orthogonalization of the $j^{th}$ basis using Hoffman's IMGS has an asymptotic FLOP count of
$\bigo{\nu_j jN}$, where $\nu_j$ is is the number of MGS iterations, which depends on $j$, and 
is typically less than $3$.
Fig.~\ref{fig:SingleOrtho} (right) plots the scaled time 
$T_{j}^{\rm IMGS}/N$ with iteration index $j$ for different values of $N$.

To summarize, our implementation of Alg.~\ref{algo:RB-greedy} requires
$\bigo{2MNk+ \frac{1}{2} \hat{\nu} N k (k+1)}$ operations
to find $k$ basis, where $\hat{\nu}$ is an ``effective" value of $\nu_j$.

Unless noted otherwise, our timing experiments
have been carried out on either the San Diego Supercomputer 
Center's machine Comet (each compute node features two 
Intel Xeon E5-2680v3 2.5 GHz chips, each equipped with 12 cores,
and connected by an InfiniBand interconnect)
or the National Center for Supercomputing Applications' machine 
Blue Waters (each compute node features 32 OS-cores, and every 2 of these 
OS-cores share a single floating point unit. Nodes are connected by 
the Gemini interconnect).
We have made only modest attempts at core-level optimization 
which, importantly, includes the use of Advanced Vector Extensions (AVX2)
for vector-vector products. 

\begin{figure}[ht]
\centering
\subfigure[Single core pivot search.]{
  \includegraphics[width=0.47\linewidth]{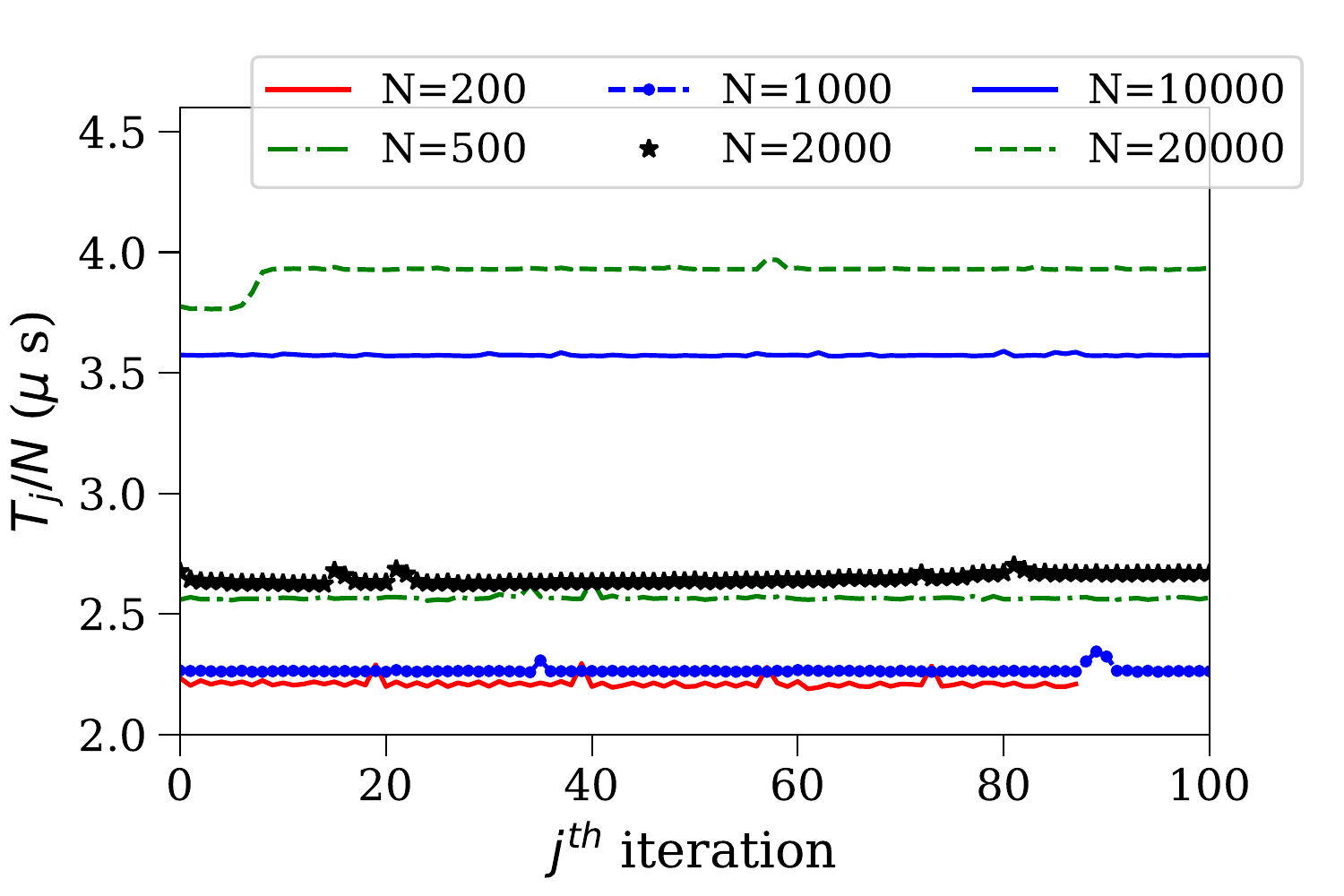}
  \label{fig:SinglePivot}}
\subfigure[Single core orthogonalization.]{
  \includegraphics[width=0.47\linewidth]{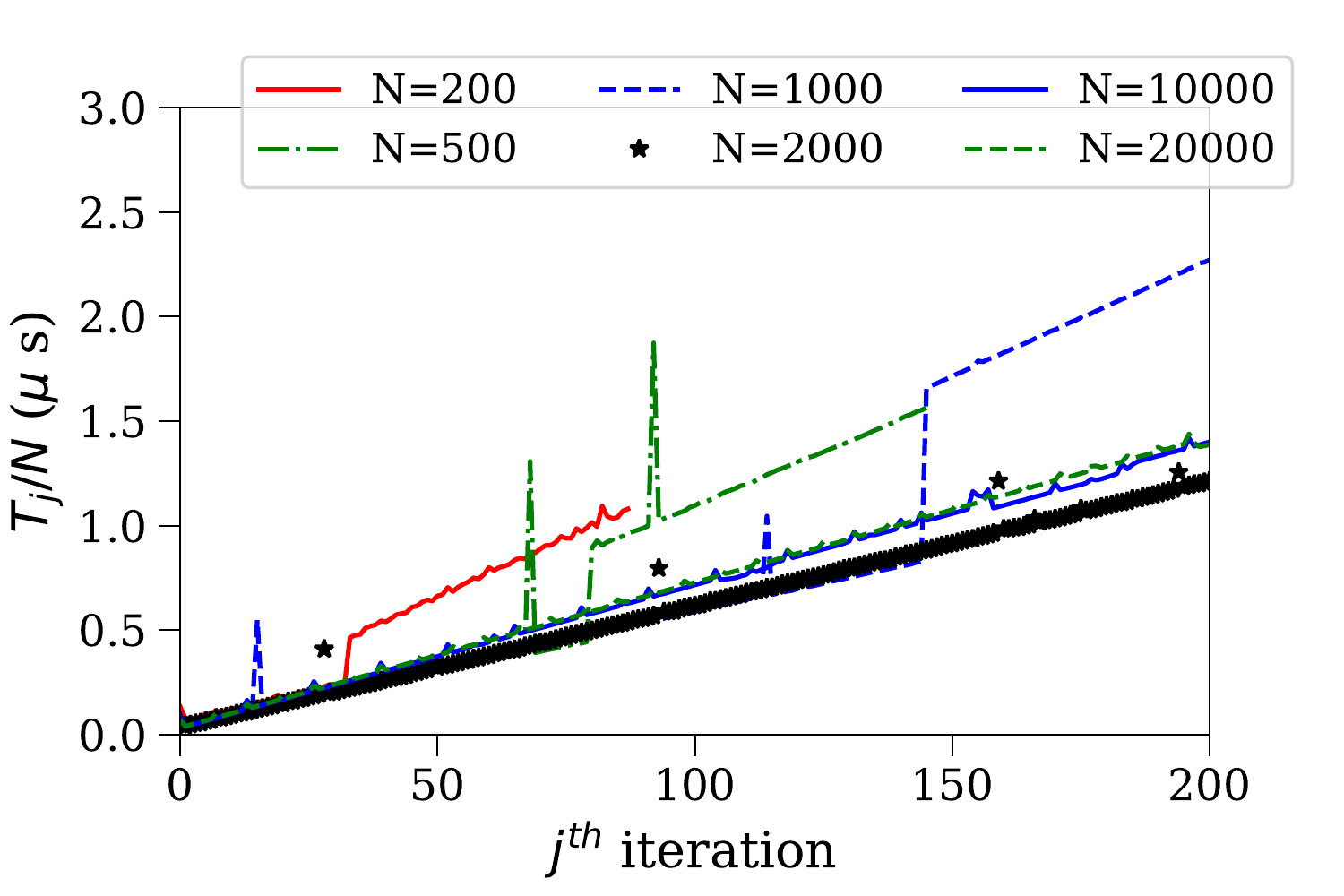}
  \label{fig:SingleOrtho}}
\caption{{\bf Left:} Pivot search time as a function of iteration index $j$.
Based solely on operation counts one should expect $T_{j}^{\rm pivot}/N$ to be 
independent of $N$. Differences are likely due to data access latencies which do depend on $N$.
{\bf Right:} Orthogonalization time as a function of iteration index $j$.
As expected, these quantities exhibit a linear growth with $j$. Timing data from 
the two smallest values of $N$ ``jump" due to additional orthogonalization iterations ($\nu$ increases from $1$ to $2$).}
\label{fig:SingleCore}
\end{figure}

\subsubsection{Parallel code}
\label{sec:NumericalSharedMem}

We consider a natural parallelization-by-column strategy. 
Each core~\footnote{We consider parallelization
with MPI (``by process"), OpenMP (``by thread") and an MPI/OpenMP hybrid. To streamline the presentation, 
we avoid the terms ``process" and ``thread" in favor of ``core". Since we will never run more than one
process or thread per core, the terminology should be unambiguous and clear from context.
The book {\em Introduction to high performance computing for scientists and engineers}
provides a comprehensive introduction to many of the high performance computing concepts
discussed through this section~\cite{hager2010introduction}.}
is given a subset of columns to manage, and a separate ``master" core
is responsible for all orthogonalization activities. 
We denote $P_{\rm pivot}$ as the number of cores devoted 
to the pivot search and $P_{\rm ortho}$ as the number of cores devoted to basis orthogonalization. 
Load balancing is trivially accomplished by distributing $N/P_{\rm pivot}$ columns of $S$ among
$P_{\rm pivot}$ cores. Each pivot core loads or creates its chunk of $S$ in parallel.
Parallelization of the orthogonalization portion of the algorithm will be discussed later; 
for now $P_{\rm ortho}=1$.

The $j^{th}$ iteration is initiated after the orthgonalization core broadcasts the $j-1$ basis 
vector to all $P_{\rm pivot}$ cores. Next, each pivot core computes its contribution of 
Eq.~\eqref{eq:Residual} and its maximum.
This information is communicated to all pivot and orthgonalization cores. The pivot core
with the global maximum residual error sends its column to the orthogonalization
core to orthogonalize.

We model the $j^{th}$ iteration's computational cost as
\begin{align} \label{eq:CostModel}
T_j = T_{j}^{\rm pivot} + T_{j}^{\rm IMGS} + C_j \, ,
\end{align}
where $T_{j}^{\rm IMGS}$  measures the orthogonalization time, 
$T_j$ is the entire while loop appearing in Alg.~\ref{algo:RB-greedy},
and $C_j$ includes additional parallelization overheads such as any communication cost 
and/or thread-management overheads. 
As equality holds to at worst $1\%$ (typically $0.001\%$),
we often report only $T_{j}^{\rm pivot} + C_j$ and $T_j$. 
All timing measurements are made from the master process,
and the timer measuring $T_{j}^{\rm pivot}$ starts before the next basis vector
is broadcasted to all the workers and ends after the next $j+1$ (unorthogonalized) 
column basis has been received by the master process.
Similar to the single core case (cf.~Fig.~\ref{fig:SinglePivot}) we 
observe (as expected) $T_{j}^{\rm pivot} + C_j$ to be independent
of $j$ and, therefore, often report values at some fixed value of $j$. 

We consider parallelization by message passing interface (MPI) and OpenMP.

{\bf MPI.} Each MPI process runs on a unique core. The pivot and orthogononalization cores 
communicate global pivot information using MPI\_Allreduce(),
the selected column vector is passed to the orthgonalization work using MPI\_Send()
and MPI\_BCast() provides all the pivot cores 
with this new orthonormal basis~\footnote{We experimented with a few different MPI library
functions, such as broadcast, reduction and gather, but found these to perform worst. The 
code's git history documents these experiments, which are not reported here.}.

{\bf OpenMP.} OpenMP uses threads to parallelize a portion of the code using 
the fork-join model. When using OpenMP, we define a large parallel region construct
using $P_{\rm pivot}$ threads and 
enclosing the entire while loop; in fact most of the worker's code is inside of the 
parallel region. 
We found this to given better performance results 
as compared with parallelizing the for-loop over columns, possibly because a wider 
parallelized region avoids multiple fork-joins.
A designated master thread carries out the orthogonalization task.

Figure~\ref{fig:SingleNode} considers a family of strong scaling tests where the matrix
size is fixed and we vary the number of cores from $1$ to $24$. 
The left panel reports the parallelization efficiency
\begin{align}
E_C = \frac{T_1}{C T_C} \, ,
\end{align}
for the pivot search parallelized with OpenMP, 
where $C$ is the number of cores, and $T_1$ and $T_C$ denote the walltime 
using $1$ and $C$ cores, respectively. Perfect scalability is achieved whenever $E_C = 1$.
The code's speedup, another often
quoted scalability measure, is simply $S_C = C \times E_C$. 

Consistently high efficiencies 
are observed over a range of problem sizes, with
speedups $\approx 20$ routinely observed. For smaller problem sizes, the efficiency
is reduced as the parallel overhead $C_j$ becomes 
a sizable fraction of the overall cost~\footnote{Very large values of $M$,
say $M \ge 10^{6}$, also shows reduced scalability presumably due to memory access times.}. 
Figure~\ref{fig:SingleNodePivot} shows the pivot search portion of the algorithm 
is efficiently parallelized. Figure~\ref{fig:SingleNodeFull} shows the full algorithm's efficiency. 
Evidently scalability is poor for shorter matrices (small values of $N$), which should be expected from
Eq.~\eqref{eq:CostModel} and Amdahl's law. Approximating the computational
cost to be proportional to the asymptotic FLOP count and assuming 
$M \gg k$, $M > P_{\rm pivot} k$ and $C_j=0$, the efficiency of our algorithm is
\begin{align} \label{eq:PredictedEff}
E \approx 1 - \nu k (P_{\rm pivot}-1) / (2M) \,.
\end{align}

Thus, for good scalability, our problem should require large values of $M$.
Most model reduction applications easily meet this requirement. 
Indeed, model reduction seeks to approximate the 
underlying continuum problem (often with high parametric dimensionality) 
for which $M \rightarrow \infty$,  
while for parametrically smooth models $\sigma_k \propto \exp(-k)$.
Together, these features suggest $M \gg k P_{\rm pivot}$ is 
often satisfied in practice.

\begin{figure}[ht]
\centering
\subfigure[Pivot search portion.]{
  \includegraphics[width=0.46\linewidth]{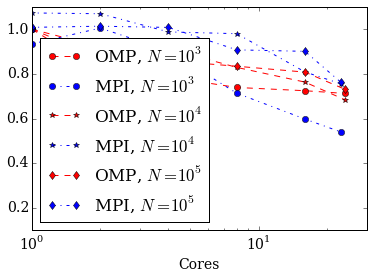}
  \label{fig:SingleNodePivot}}
\subfigure[Entire algorithm.]{
  \includegraphics[width=0.46\linewidth]{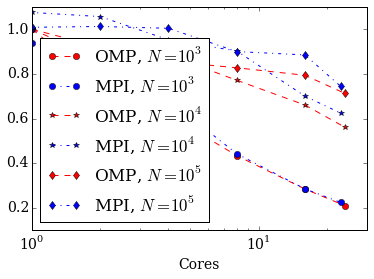}
  \label{fig:SingleNodeFull}}
\caption{Strong scaling efficiency versus cores for a sequence of increasingly ``tall" matrices with
$M=1,000$ and $k=100$ fixed. Our tests were performed on a $24$-core shared memory node of the supercomputer Comet. 
Blocks of columns are distributed to each core with increasingly fewer blocks per core. Specifically 
$1000$, $500$, $250$, $125$, $62$, and $43$ columns are distributed among, respectively, 
$1$, $2$, $4$, $8$, $16$ and $24$ cores. We measure scalability for
the pivot search portion (from the timing data $T_{\rm pivot}$) of the algorithm (left)
and the full algorithm (right).}
\label{fig:SingleNode}
\end{figure}

\subsubsection{Large-core scaling}

Our approach to distributed memory parallelization closely follows that of shared memory parallelization. 
As OpenMP does not support distributed memory environments we cannot use this library for inter-node communication.
We consider two cases. First, a pure-MPI parallelization exactly as described in Sec.~\ref{sec:NumericalSharedMem}.
Second, a hybrid MPI/OpenMP implementation launching one MPI process per 
socket~\footnote{To improve memory access performance, all MPI processes and 
their threads are bound to a socket.}. In turn, each MPI process spawns a team
OpenMP threads. Each thread is responsible for a matrix chunk over which a local pivot search is performed. 
As before we avoid multiple thread fork-joins by enclosing the entirety of the while-loop within
an OpenMP parallel region, with the master thread responsible for all MPI calls (a so-called ``funneled" 
hybrid approach). As only processes participate in inter-node communication, a hybrid code 
potentially reduces the communication overhead as compared to a pure-MPI implementation.
These benefits could become increasingly important at extremely large core counts. 

Figure~\ref{fig:MuliNodeComet} reports on a few scalability tests we ran on
Comet. First, we consider how the pivot search portion of the algorithm scales to large
core counts for a fixed matrix size. Figure~\ref{fig:MultiNodeSSComet}
shows a typical case. We see that going from one core to one node maintains high efficiencies,
which should be expect in light of Fig.~\ref{fig:SingleNodePivot}.
Running on an increasing number of cores means each cores has less work to do (fewer columns 
per core) while the entire algorithm has more communication. As expected,
the efficiency decreases but maintains high values up to $370$ cores. Such observations are matrix-dependent, 
and larger (smaller) matrices are expected to exhibit better (worst) scalability. Interestingly, the MPI/OpenMPI hybrid 
strategy performs much better than pure-MPI for this problem, indicating that the communication overhead can be
somewhat ameliorated; this is also a matrix-dependent observation. Next, we consider how the entire 
algorithm scales to large core counts when the matrix size is also increased commensurate to the number of core; 
this constitutes a weak scaling test. Figure~\ref{fig:MultiNodeWSComet} shows a typical case. We see that the full program's runtime has negligible increase when going from one core to $1,728$ cores (the maximum allowable size on Comet). This demonstrates that very large matrices can be efficiently handled. 

As a final demonstration of our code's capabilities, we repeat the weak scaling test on Blue Waters
where more cores can be used. Figure.~\ref{fig:BlueWaters} shows the same
excellent weak scalability all the way up to $32,768$ cores. In particular, we perform
a column pivoted QR decomposition (to discover the first $k=100$ basis) on a $10,000$-by-$3,276,800$
sized matrix in $13.5$ seconds. For comparison, it took a similar time 
of $11.5$ seconds to QR decompose a much smaller $10,000$-by-$3,200$ matrix using $32$ cores. 

\begin{figure}[ht]
\centering
\subfigure[Strong scaling.]{
  \includegraphics[width=0.46\linewidth]{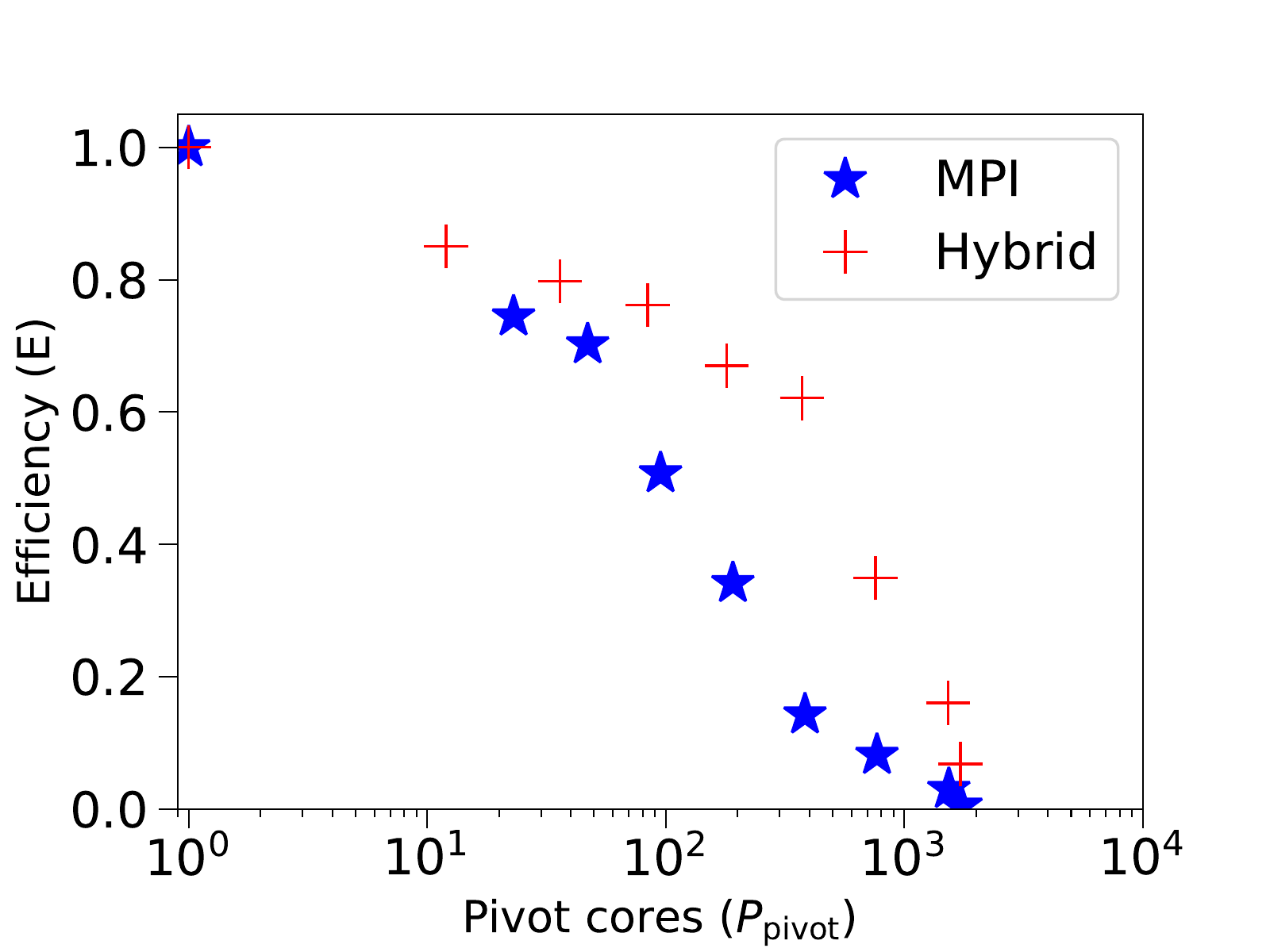}
  \label{fig:MultiNodeSSComet}}
\subfigure[Weak scaling.]{
  \includegraphics[width=0.46\linewidth]{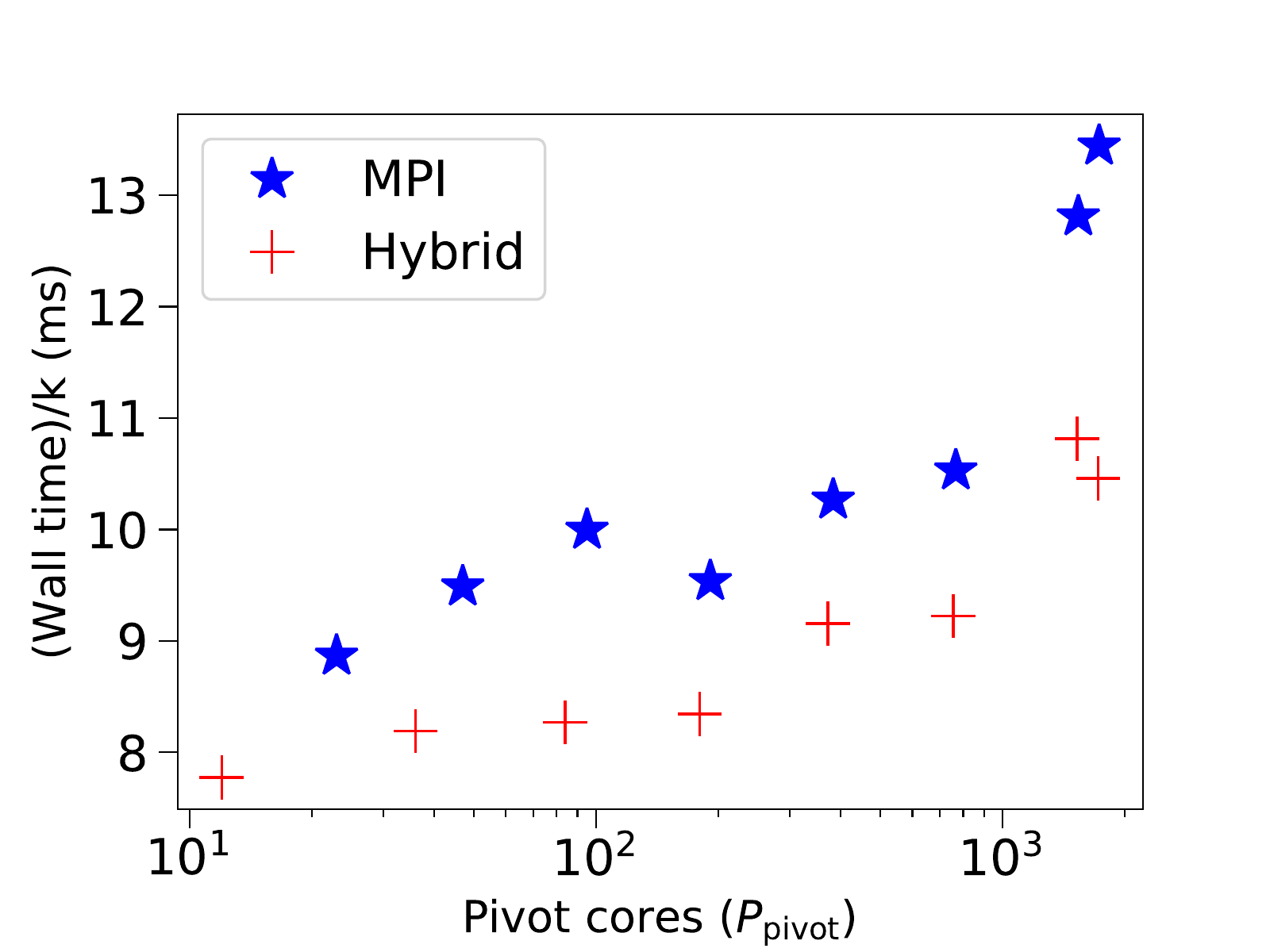}
  \label{fig:MultiNodeWSComet}}
\caption{Scalability on the supercomputer Comet up to $1,728$ cores with $k = 100$.
{\bf Left:} Strong scaling efficiency of the pivot search part of the algorithm 
for a matrix with $N=1,000$ rows and $M=240,000$ columns.  
At  $1728$ cores, each pivot core has $\approx 138$ columns and 
the MPI (Hybrid) efficiency is $.005$ ($.07$), corresponding to 
very small speedups. This is expected; 
at this scale the communication time is a sizable fraction of the overall 
cost, and only increasing the workload (more rows or columns) would increase the efficiency. 
For this particular matrix size, using $370$ cores the hybrid 
parallelization strategy has good efficiencies of about $.62$. Of particular importance
is that the hybrid code is significantly more efficient than the 
pure-MPI code. {\bf Right:} Weak scaling efficiency of the full algorithm for a 
matrix with $N=10,000$ rows
and $M = P_{\rm pivot} \times 100$ columns. The total time is scaled by $k=100$.
The slow growth in the total time with increasingly more cores/columns
demonstrates good weak scalability, allowing very large matrices to be tackled.
(perfect weak scalability would result in a horizontal line.) }
\label{fig:MuliNodeComet}
\end{figure}

\begin{figure}[ht]
\centering
  \includegraphics[width=0.46\linewidth]{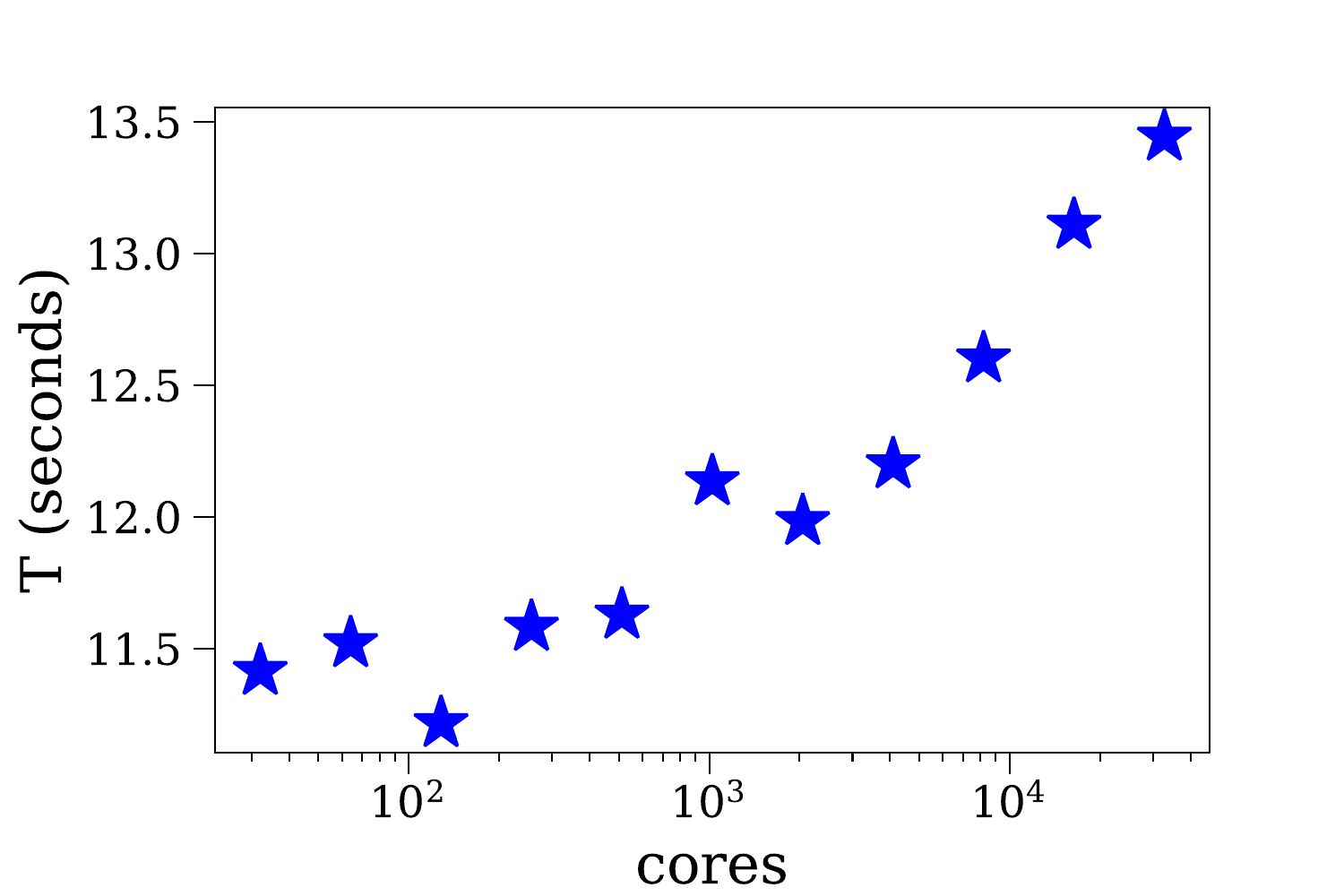}
\caption{Weak scaling on Blue Waters with $N=10,000$ rows and $k=100$. We let
the number of columns $M = 100 * {\rm cores}$ scale linearly with the number of cores.
The number of cores is increased from $32$ to $32,768$, with the largest 
matrix having $3,276,800$ columns. A mild logarithmic growth in the 
full algorithm's time is evident, and the good weak scalability
demonstrates the ability to compute column pivoted QR
decompositions of extremely large matrices.}
\label{fig:BlueWaters}
\end{figure}

\subsubsection{Discussion and Limitations}

We have demonstrated our code is capable of handling very large matrix sizes. 
For smaller sized matrices, 
the orthogonalization routine becomes a performance obstacle. Because the basis is revealed sequentially,
certain efficient algorithms (such as block QR) are not
applicable~\cite{gallopoulos2015parallelism}. Furthermore, for conditioning purposes, 
we have used the IMGS 
of Hoffman which cannot be written as a matrix-vector product
when the basis are sequentially known, thereby precluding efficient BLAS-2 routines~\cite{gallopoulos2015parallelism}.
Briefly, we offer three potential solutions. First, 
alternative orthogonalization algorithms, like the ``CMGSI" algorithm of Hoffman
may be better suited for parallelization~\cite{Hoffmann_IMGS}.
Second, specialized 
accelerator hardware may reduce the vector-vector product costs by offloading. Finally, one 
could consider alternative global pivot selection criteria by overlapping the pivot
search and orthogonalization computations. 

\section{Concluding remarks} \label{s:conc}

Dimensional and model-order reduction have a wide range of applications. In this paper, we have considered two of the most popular dimensional-reduction algorithms, SVD and QR decompositions, and summarized their most important properties. In most model-based dimensional reduction applications the model varies smoothly with parameter variation. In such cases, matrices like the Kahan one are rarely (if ever) encountered in practice. Instead, the approximation problem is characterized by a fast decaying Kolmogorov $n$-width. Due to the equivalence we showed between the RB-Greedy algorithm and a certain QR pivoting strategies, we argue that, for many cases, a QR-based model reduction approach is preferable to an SVD-based one. The QR decomposition is faster and more easily parallelized while providing comparable approximation errors. 

Finally, we have described a new, publicly-available QR-based model reduction code, 
greedycpp~\cite{greedycpp}. 
Our code is
based on a well-conditioned version of the MGS algorithm which overcomes the stability issues which plague ordinary GS while being 
straightforward to parallelize (as compared to Householder reflections or Givens rotations). This massively parallel code, developed with model reduction in mind, performed QR decomposition on matrices as large as $10,000$-by-$3,276,800$ on the supercomputers Comet and Blue Waters. Parts of this code have been used to accelerate gravitational wave inference 
problems~\cite{smith2016fast,canizares2015accelerated,abbott2017first,abbott2017gw170814}. 

\section*{Acknowledgments}

\noindent We acknowledge helpful discussions with
Yanlai Chen,
Howard Elman,
Chad Galley,
Frank Herrmann,
Alfa Heryudono,
Saul Teukolsky,
and Manuel Tiglio. 
We thank
Priscilla Canizares, 
Collin Capano, 
Peter Diener,
Jeroen Meidam,
Michael Purrer, 
Rory Smith,
and Ka Wa Tsang for careful error reporting, code improvements and testing of early versions of the greedycpp code. 
Michael Purrer and Rory Smith for interfaces to the gravitational waveform models implemented in LALSimulation. SEF was 
partially supported by NSF award PHY-1606654 and the Sherman Fairchild Foundation.
HA was partially supported by NSF grant DMS-1521590.
Computations were performed on NSF/NCSA Blue Waters under allocation PRAC
ACI-1440083, on the NSF XSEDE network under allocations TG-PHY100033
and TG-PHY990007, and on the Caltech compute cluster Zwicky (NSF
MRI-R$^2$ award no.\ PHY-0960291).

\bibliographystyle{siam}
\bibliography{references}

\end{document}